\newcolumntype{L}[1]{>{\raggedright\arraybackslash}p{#1}}
\newcolumntype{C}[1]{>{\centering\arraybackslash}p{#1}}
\newcolumntype{R}[1]{>{\raggedleft\arraybackslash}p{#1}}
\long\def\comment#1{}
\DeclareMathOperator*{\argmin}{arg\,min}
\newcommand{\nop}[1]{}
\newcommand{\figureBelowMargin}{\vspace{0ex}}
\newtheorem{theorem}{\bf Theorem}[section]
\theoremstyle{remark}
\theoremstyle{definition}
\newtheorem{definition}{\bf Definition}
\newcommand{\revision}[1]{\color{black}{#1} \color{black}}
\def\BibTeX{{\rm B\kern-.05em{\sc i\kern-.025em b}\kern-.08em
		T\kern-.1667em\lower.7ex\hbox{E}\kern-.125emX}}
\begin{document}
	\title{Wait to be Faster: a Smart Pooling Framework for Dynamic Ridesharing}

	\author{
		{Xiaoyao Zhong{\small$~^{*}$}, Jiabao Jin{\small$~^{*}$}, Peng Cheng{\small$~^{*}$}, Wangze Ni{\small$~^{\Diamond}$}, Libin Zheng{\small$~^{\dagger}$}, Lei Chen{\small$~^{\Diamond}$},  Xuemin Lin{\small$~^{\ddagger}$}}\\
		\fontsize{10}{10}\itshape
		$~^{*}$East China Normal University, Shanghai, China\\
		\fontsize{10}{10}\itshape
		$~^{\Diamond}$HKUST(GZ) and HKUST, Guangzhou and Hong Kong SAR, China\\
		\fontsize{10}{10}\itshape
		$~^{\dagger}$Sun Yat-sen University, Guangzhou, China\\
		\fontsize{10}{10}\itshape
		$~^{\ddagger}$Shanghai Jiaotong University, Shanghai, China\\
		
		\fontsize{9}{9}\upshape
		\{xiaoyao.zhong, jiabaojin\}@stu.ecnu.edu.cn; pcheng@sei.ecnu.edu.cn; wniab@cse.ust.hk; \\ zhenglb6@mail.sysu.edu.cn; leichen@cse.ust.hk; xuemin.lin@gmail.com
	}

\maketitle

\begin{abstract}
	Ridesharing services, such as Uber or Didi, have attracted considerable attention in recent years due to their positive impact on environmental protection and the economy. Existing studies require quick responses to orders, which lack the flexibility to accommodate longer wait times for better grouping opportunities. In this paper, we address a NP-hard ridesharing problem, called \textit{Minimal Extra Time RideSharing} (METRS), which balances waiting time and group quality (i.e., detour time) to improve riders' satisfaction. To tackle this problem, we propose a novel approach called WATTER (WAit To be fasTER), which leverages an order pooling management algorithm allowing orders to wait until they can be matched with suitable groups. The key challenge is to customize the extra time threshold for each order by reducing the original optimization objective into a convex function of threshold, thus offering a theoretical guarantee to be optimized efficiently. We model the dispatch process using a Markov Decision Process (MDP) with a carefully designed value function to learn the threshold. Through extensive experiments on three real datasets, we demonstrate the efficiency and effectiveness of our proposed approaches.

\end{abstract}

\section{Introduction}
\label{sec:introduction}
With the increasing popularity of the sharing economy, more and more ridesharing platforms are emerging to facilitate people's lives, such as Uber and Didi. The ridesharing service is to group riders with overlapping travel routes and similar time schedules, and then assign them to workers to serve. It not only lowers prices for riders and saves fuel consumption for workers, but also eases traffic congestion and reduces carbon dioxide emissions.

In ridesharing, the platform may have different optimization goals: (1) maximizing platform revenue~\cite{cheng2019queue, zeng2020gas, wang2022revenue}; (2) minimizing the total travel distance of workers~\cite{xu2019insertion, liu2020mobility, Haliem2021modelfree}; (3) maximizing the number of served orders~\cite{santos2013dynamic, zeng2020gas}. To achieve a high service rate, orders will be dispatched even if they result in worse satisfaction, and will only be rejected when they cannot be served in the extreme cases.

In dynamic ridesharing, existing studies propose two processing modes: online-based mode and batch-based mode. Online-based methods~\cite{cheng2017utility, tong2018prunegdp, wang2022meeting} provide a real-time response to each order.  Batch-based methods~\cite{huang2014kinetic, bei2018algorithms, zeng2020gas} usually group the orders within a batch (i.e., a time window of 5 seconds) based on specific combination strategy and assign the groups to workers. We observe from real datasets that orders can wait for a while  (e.g., 10 seconds) to get a better grouping result with less travel costs. We illustrate this with the following example:

\begin{figure}[t]\centering
	\scalebox{0.16}[0.16]{\includegraphics{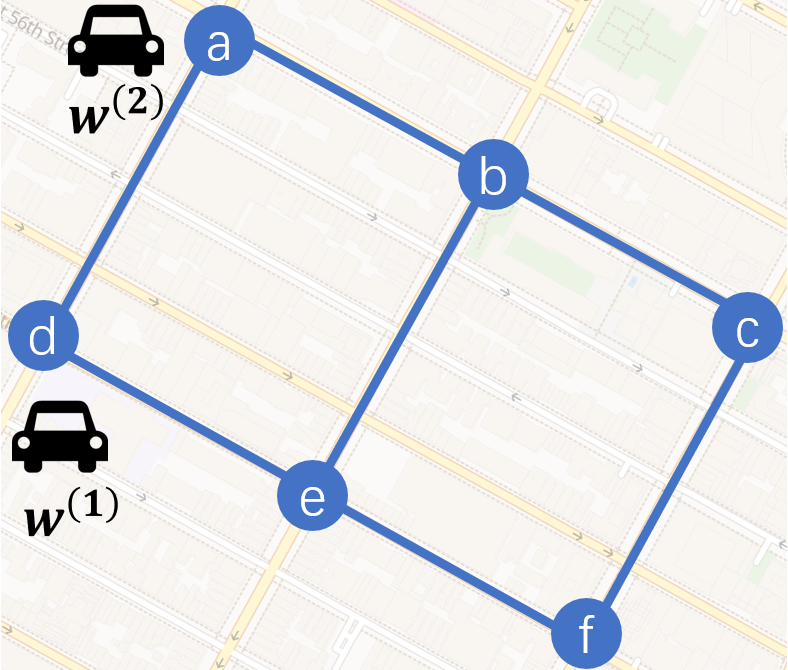}}
	\caption{\small An example of road network.}
	\label{fig:example1_roadnetwork}
\end{figure}

\begin{table}[t]
	\begin{center}
		{\revision{
				\small\scriptsize
				\caption{\small Online arriving orders.} \label{tab:example1}
				\begin{tabular}{p{1.1cm}|p{1.1cm}p{1.1cm}p{1.1cm}p{1.1cm}}
					\hline 
					{\bf Order} & {\bf Release Time (s)} & {\bf Batch round} & {\bf Pick-up Location} & {\bf Drop-off Location} \\
					$o^{(i)}$ & $t^{(i)}$ & $t^{(i)}/10$ & $l_p^{(i)}$ & $l_d^{(i)}$ \\ \hline 
					$o^{(1)}$ & $5$ & $0$ & $a$ & $c$ \\
					$o^{(2)}$ & $8$ & $0$ & $d$ & $f$ \\
					$o^{(3)}$ & $10$ & $0$ & $d$ & $c$ \\
					$o^{(4)}$ & $12$ & $1$ & $e$ & $f$ \\
					\hline
				\end{tabular}
			}
		}\vspace{-2ex}
	\end{center}
\end{table}

\par
\revision{\textit{
		Example 1. We assume that there are 2 idle workers $w^{(1)} \sim w^{(2)}$ and 4 orders $o^{(1)} \sim o^{(4)}$ arrive at the platform in ascending order of $t^{(i)}$. The optimization objective is to minimize the total travel time of the workers. The road network consists of 6 nodes and 7 edges. Each edge represents a road with a travel time of 1 minute. The information of orders is shown in the Table \ref{tab:example1}.
		(1) For the non-sharing method, the 2 available workers can only serve orders sequentially. The trajectory of $w^{(1)}$ is $\langle d, f, e, f\rangle$ and that of $w^{(2)}$ is $\langle a, c, d, c\rangle$. This method result in a total travel time of $4 + 8 = 12$ minutes.
		(2) For the online-based method, the platform will insert the locations into workers' routes greedily. The trajectory of $w^{(1)}$ is $\langle d, e, f, d, c\rangle$ and that of $w^{(2)}$ is $\langle a, c \rangle$. This method results in a total travel time of $7 + 2 = 9$ minutes.
		(3) For the batch-based method, let the batch size be $10$ seconds. For instance, orders $o^{(1)}\sim o^{(3)}$ are in batch round $0$, and order $o^{(4)}$ is in another batch round $1$. As a result, $o^{(1)}$ and $o^{(3)}$ will be grouped together. Because $o^{(2)}$ and $o^{(4)}$ are not in the same batch, they will be served sequentially. This method results in a total travel time of $4 + 3 = 7$ minutes.
		However, the best match for $o^{(1)}$ is actually $o^{(3)}$, and the best match for $o^{(2)}$ is $o^{(4)}$. Hence, resulting in a total travel time of $2+3=5$ minutes. Compared to the previous three methods, this pooling-then-grouping strategy only causes the orders to wait slightly longer but greatly reduces the total travel time.
}}

\textit{Challenge}: Unlike existing studies that only respond orders immediately or in a static mini-batch time, \textit{is it possible to allow orders to wait for a period of time to take advantage of better grouping opportunities, which would ultimately result in shorter total travel times?} Intuitively, the longer an order waits and the more other orders arrive, the higher probability of it being grouped with more suitable orders, which leads to that its total travel cost/time can be reduced. As it is difficult to directly predict the arriving orders in the next several seconds, the main challenge is to determine the optimal waiting/pooling time before dispatching the order.

To address the challenge, we formalize a new problem \underline{M}inimal \underline{E}xtra \underline{T}ime \underline{R}ide \underline{S}haring (METRS), which takes the waiting times and detour times into consideration. 

We propose a novel framework called WATTER (WAit To be fasTER), which leverages an \textit{order pooling management algorithm} to maintain the orders and the shareability relationships in the temporal shareability graph. We propose an effective average extra time threshold-based grouping strategy that assigns a threshold of \textit{expected extra time} to each order. We theoretically prove that the optimization objective of METRS problem can be reduced to a convex function of extra time threshold, providing a theoretical guarantee for optimization effectiveness.
We take spatio-temporal environment into consideration, then model the decision-making process of holding or dispatching orders in the pool as a Markov Decision Process (MDP). Historical data is used to offline generate training experience by simulating the dispatch process of the framework incorporated with the proposed grouping strategy. We utilize this experience to train the value function in MDP, which is then used as an estimation of the expected extra time in the online decision-making process.
To summarize, we make the following contributions in the paper:
\begin{itemize}[leftmargin=*]
	\item We formulate the METRS problem to balance waiting response time and detour time and prove its hardness in Section \ref{sec:problemDefinition}.  
	\item We introduce the order pooling management algorithm and related algorithms in Section \ref{sec:solution1}.
	\item We devise an average extra time threshold-based grouping strategy with theoretical analysis in Section \ref{sec:solution2}. 
	\item We model the dispatch process and establish an offline reinforcement learning model combined with the online threshold-based strategy to make decisions in Section \ref{sec:solution3}. 
	\item Extensive experiments on real datasets is conducted in Section \ref{sec:experimental}. 
\end{itemize}

\section{Problem Definition}
\label{sec:problemDefinition}

\subsection{Preliminaries}

\begin{table}
	\revision{
		\centering \vspace{-4ex}
		{\small\scriptsize
			\caption{\small Symbols and Descriptions.} \label{table0}
			\begin{tabular}{l|l}
				{\bf Symbol} & {\bf \qquad \qquad \qquad\qquad\qquad Description} \\ \hline \hline
				
				$o^{(i)}$   & an order sent by rider\\
				
				$O$   & the set of all orders\\
				
				$g$   & a group of orders \\
				
				$t^{(i)}$   & the release time of order $o^{(i)}$\\
				
				$t_r^{(i)}$   & the response time of order $o^{(i)}$\\
				
				$t_d^{(i)}$   & the detour time of order $o^{(i)}$\\
				
				$t_e^{(i)}$   & the extra time of order $o^{(i)}$\\
				
				$\tau^{(i)}$   & the drop-off deadline of order $o^{(i)}$\\
				
				$p^{(i)}$   & the reject penalty of order $o^{(i)}$\\
				
				$l_p^{(i)}$  & the pick-up location of order $o^{(i)}$\\
				
				$l_d^{(i)}$   & the drop-off location of order $o^{(i)}$ \\
				
				$L$   & a route of ordered sequence of locations \\
				
				$cost(l_i, l_j)$ & the the shortest travel cost of two locations $l_i$ and $l_j$ \\
				
				$T(L)$ & the total travel cost of the route $L$\\
				
				$w^{(j)}$   & a worker \\
				
				$k^{(j)}$   & the vehicle capacity of the worker $w^{(j)}$\\
				\hline
				\hline
			\end{tabular}
		}\vspace{-2ex}
	}
\end{table}

\begin{definition} (Order)
	An order is denoted by $o^{(i)}= \langle l_p^{(i)}, l_d^{(i)},c^{(i)},t^{(i)}, \tau^{(i)}, \eta^{(i)} \rangle$. The order $o^{(i)}$ is released at timestamp $t^{(i)}$ and contains $c^{(i)}$ riders. The order asks to deliver riders from the pick-up location $l_p^{(i)}$ to the drop-off location $l_d^{(i)}$ before the deadline $\tau^{(i)}$. The platform needs to give response to order within waiting time limit $\eta^{(i)}$.
\end{definition}

An \textit{order group} is a collection of orders that can be represented as $g=\{ o^{(1)}, o^{(2)}, ... ,o^{(|g|)}\}$, where $|g|$ denotes the number of orders in the group. Note that the waiting time limit $\eta^{(i)}$ is a customized parameter just to indicate the preferred limit waiting time of $o^{(i)}$, which is not a constraint. In our paper, if $o^{(i)}$ has waited more than $\eta^{(i)}$ time, it should be dispatched immediately when there is a suitable group, Otherwise, it will be rejected.

\begin{definition} (Worker)
	A worker can be denoted by $w^{(j)} = \langle l^{(j)},$ $ k^{(j)}, a^{(j)} \rangle$, where $l^{(j)}$ is the worker's current location, $k^{(j)}$ is the vehicle capacity, and $a^{(j)}$ is  worker's availability.
\end{definition}

The availability $a^{(j)}$ can either be \textit{idle} (e.g., waiting for an assignment) or \textit{busy} (e.g., delivering an order group). In this paper, we assume that a worker can only deliver one order group at a time. 

Let $G^{(j)}$ denote the set of order groups that worker $w^{(j)}$ served in a day. Then, the set $G=\cup_{w^{(j)}\in W}G^{(j)}$ is composed of all served order groups. Given the set  $O$ of all orders, let  $O^+=\cup_{g \in G}\cup_{o^{(i)}\in g} o^{(i)} \subseteq O$ be the successfully served orders in $G$. Let $O^-=O-O^+$ denote the set of all rejected orders.

\begin{definition} (Route) 
	The route is an ordered sequence of locations denoted by $L = \langle l_1, l_2, ... , l_{|L|} \rangle$, where each $l_i$ represents a location on the road network.
\end{definition}

An order group $g$ can generate a route $L$ by aligning a sequence that includes the pick-up and drop-off locations of all orders in $g$. Then, the assigned worker $w^{(j)}$ travels to location $l_1$ and follows the route to serve the orders in the group. We use $L^{(i)}$ to denote the sub-route starting from $l_1$, passing through $l_p^{(i)}$, and ending at $l_d^{(i)}$. The travel cost of the route can be calculated as $T(L) = \sum_{k = 1}^{|L| - 1}cost(l_k, l_{k+1})$, where $cost$ is the shortest travel time of two locations.

\begin{definition} (Response Time)
	For a given order $o^{(i)}$, let $t^{(i)}_n$ denote the time when the platform notifies the grouping result, then the response time of $o^{(i)}$ is $t_r^{(i)}= t^{(i)}_n - t^{(i)}$, which refers to  the  waiting time from the order being released to being notified with assignment.
\end{definition}

In practice, riders have limited patience for waiting. Long response times may cause riders to cancel their orders, resulting in potential revenue loss for the platform. However, minimizing response time alone may lead to missing the potential future properer riders and results in long detours, which also can sacrifice the satisfactions of riders and drive them to choose other transportation methods or platforms. Thus, it is important to smartly balance the response times and detours without clearly knowing future orders.

\begin{definition} (Detour Time)
	For a given order $o^{(i)}$ in order group $g$, let $L$ be the generated route for $g$, the detour time $t_d^{(i)}$ of $o^{(i)}$ is denoted by $t_d^{(i)}= T(L^{(i)}) - cost(l_p^{(i)}, l_d^{(i)})$, which refers to the additional time cost incurred by sharing the route with other riders compared to the minimal shortest time $cost(l_p^{(i)}, l_d^{(i)})$.
\end{definition}

Both response time and detour time can be considered as the extra cost of riders in taking the ridesharing service, which is the major factor to affect the riders' satisfaction. In this paper, we define extra time as a unified metric to reflect riders' satisfaction.

\begin{definition} (Extra Time)
	The extra time $t_e^{(i)}$ of order $o^{(i)}\in g$ is defined as:
	\begin{equation}
		t_e^{(i)} = \alpha t_d^{(i)} +\beta t_r^{(i)}
	\end{equation}
	where $\alpha$ and $\beta$ are coefficients used for trade-off between the detour time $t_d^{(i)}$ and the response time $t_r^{(i)}$. 
\end{definition}

We offer the flexibility to adjust the weight in definition. By setting $\alpha=1$ and $\beta=1$, $t_e^{(i)}$ represent the real extra time of the rider, compared to the shortest travel time of the order. 

\subsection{Problem Definition}
We defined the \underline{M}inimal \underline{E}xtra \underline{T}ime \underline{R}ide\underline{S}haring problem:

\begin{definition} (METRS Problem)
	Given an online order set $O$ and a worker set $W$, the METRS problem is to find a set of shareable order groups $G$ for each order $o^{(i)}\in O$ and assign each group of orders with a suitable worker, such that total extra time of orders in the platform $\Phi(W,O)$ is minimized:
	\begin{equation}
		\min_{G} \Phi (W,O)=  \sum_{o^{(i)} \in O^+} t_e^{(i)} + \sum_{o^{(j)} \in O^-} p^{(j)} \label{equ:problem}
	\end{equation}
	\noindent where $p^{(j)}$ indicates the penalty of $o^{(j)}$ if it is rejected. An order group $g$ is \textit{shareable} if and only if it can generate a \textit{feasible} route $L$ with a available worker $w^{(j)}$ satisfying the following constraints:
	\begin{enumerate}
		\item Sequential constraint: $\forall o^{(i)} \in g$, it has $l_p^{(i)} = l_x\in L$ and $l_d^{(i)}= l_y\in L$, then $x < y$ must be satisfied. 
		\item Deadline constraint: $\forall o^{(i)} \in g$, then $t^{(i)} + t_r^{(i)} + T(L^{(i)})$ $ < \tau^{(i)} $ must be satisfied. 
		\item Capacity constraint: at any time, the number of riders in the vehicle cannot exceed its capacity.
	\end{enumerate}
	\label{def:problem}
\end{definition}

The penalty in the objective function \ref{equ:problem} indicates dissatisfaction after the rider has waited for a long time but not been served. 
Note that riders in order $o^{(i)}$ can wait for a maximum response time $\max t_r^{(i)} = \tau^{(i)} - t^{(i)} - cost(l_p^{(i)}, l_d^{(i)})$, since if the response time is longer than $\max t_r^{(i)}$ its deadline constraint must be violated. In order to keep consistency with served orders, we set the penalty as the maximum response time $p^{(i)} = \max t_r^{(i)}$.

\subsection{Hardness}
We prove that the METRS problem is NP-hard by a reduction from the Shared-Route Planning Query (SRPQ) problem \cite{zeng2020gas}, which has been proved as  an existing NP-hard problem.

\begin{theorem} (hardness of the METRS problem)
	The METRS problem defined in Definition \ref{def:problem} is NP-hard.\label{theo:np}
\end{theorem}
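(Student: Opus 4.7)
\emph{Proof plan.} The plan is to give a polynomial-time reduction from the Shared-Route Planning Query (SRPQ) problem of \cite{zeng2020gas}, which is already known to be NP-hard. Producing such a reduction shows that any polynomial-time algorithm for METRS would yield one for SRPQ, so METRS must be NP-hard as well.

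First, I would construct a METRS instance $I_{\text{METRS}}$ from an arbitrary SRPQ instance $I_{\text{SRPQ}}$. The construction keeps the same road network, order set, and worker; fixes every release time $t^{(i)}$ to the same value so that the response-time component vanishes; chooses the coefficients $\alpha = 1$ and $\beta = 0$ so the extra-time objective degenerates to pure detour time; makes every waiting-time limit $\eta^{(i)}$ and deadline $\tau^{(i)}$ loose enough to impose only the SRPQ-style feasibility constraints; and sets the rejection penalty $p^{(i)}$ strictly larger than any achievable total detour, forcing an optimal METRS solution to serve every order whenever the SRPQ instance is feasible.

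Second, I would argue correctness. Under these choices the METRS objective collapses to $\sum_i t_d^{(i)} = \sum_i T(L^{(i)}) - \sum_i cost(l_p^{(i)}, l_d^{(i)})$, whose second term is fixed by the input, so minimizing the METRS objective reduces to minimizing $\sum_i T(L^{(i)})$. With the single-worker, single-group structure enforced by the construction, this quantity is monotonically tied to the total travel cost of the shared route that SRPQ optimizes, so optimal METRS solutions translate back into optimal SRPQ routes and vice versa. Since the construction is clearly polynomial in $|I_{\text{SRPQ}}|$, the reduction is complete and METRS inherits the NP-hardness of SRPQ.

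The main obstacle I anticipate is the last alignment step: $\sum_i T(L^{(i)})$ is not literally $T(L)$, because the sub-routes to different drop-off points overlap differently along $L$. I would handle this by tailoring the SRPQ sub-instances used in the reduction so that the two quantities coincide up to a known additive constant and a monotone factor, for example by having all drop-off locations converge to a common terminal node so that $T(L^{(i)}) = T(L)$ for every order $i$ and the two objectives become linearly related; any other input family with a provable monotone correspondence would serve equally well for establishing the hardness conclusion.
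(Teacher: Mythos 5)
There is a genuine gap here: you have misidentified what SRPQ optimizes, and the whole correctness argument leans on that misreading. As defined in \cite{zeng2020gas} (and restated in the paper), SRPQ asks for routes $S_w$ maximizing the platform's total \emph{revenue} $\sum_{w\in W}\sum_{o\in S_w} p_r$, i.e., the total fare of the orders that get served; it has no travel-cost or detour term in its objective, and its hardness lives precisely in deciding \emph{which} orders to serve. Your construction does the opposite of preserving that structure: by setting the rejection penalty larger than any achievable detour you force every order to be served, which makes the SRPQ objective on the corresponding instance a constant, and you then try to tie the residual METRS objective $\sum_i t_d^{(i)}$ to ``the total travel cost of the shared route that SRPQ optimizes'' --- a quantity SRPQ does not optimize. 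Consequently neither direction of the claimed correspondence holds: an optimal METRS solution of your instance minimizes detour among all-serving solutions, which says nothing about revenue-optimal SRPQ routes, and an SRPQ instance in which not all orders can be served (the interesting case) is not even faithfully represented. The technical obstacle you flagged at the end ($\sum_i T(L^{(i)})$ versus $T(L)$, patched by a common terminal node) is an artifact of aiming at this wrong objective, not the real difficulty.

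For contrast, the paper's reduction is simpler and goes through the rejection term rather than the detour term: set $\alpha=\beta=0$, so $t_e^{(i)}=0$ for every served order, and set each penalty $p^{(j)}=p_r$. The METRS objective then collapses to $\sum_{o^{(j)}\in O^-} p_r$, and since $\sum_{o\in O} p_r$ is a constant of the instance, minimizing the total rejected fare is exactly equivalent to maximizing the served revenue, i.e., to SRPQ. Your instinct about large penalties minimizing rejections points in roughly this direction, but to make it work you would need the penalties to encode the fares (not a uniform large constant) and the extra-time term to be switched off entirely ($\alpha=\beta=0$), otherwise detours perturb the comparison between solutions serving the same order set; with $\alpha=1$ as you chose, the equivalence with SRPQ's revenue objective cannot be recovered.
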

\begin{proof}
	We prove the theorem by a reduction from the SRPQ problem defined in \cite{zeng2020gas}, which has been proved to be an NP-hard problem.  The goal of SRPQ problem is to find, for each worker $w \in W$, a route $S_w$, such that the total revenue of the platform $OBJ(W,O)$
	$=\sum_{w\in W}\sum_{o\in S_w}p_r $ is maximized, where $p_r $ is the fare/payment of each order. We can rewrite the objective function of SRPQ as below:
	
	{\scriptsize\begin{align}
			\max &\  OBJ(W,O) 
			\Rightarrow  \max \  \sum_{o^{(i)}\in O^+}p_r 
			\Rightarrow  \max \  \sum_{o^{(i)}\in O}p_r - \sum_{o^{(j)}\in O^-}p_r \notag
	\end{align}}
	Due to $\sum_{o^{(i)}\in O}p_r$ is a constant, we can reduce the objective function SRPQ problem into: $\min \sum_{o^{(j)}\in O^-}p_r$
	
	Then, by setting the coefficient $\alpha = \beta = 0$ in $t_e$, and setting $p^{(j)}$
	$=p_r$, we show that the reduced SRPQ problem is equivalent to the METRS problem. That is, for a given SRPQ problem, we can reduce it into an instance of METRS problem. The SRPQ problem can be solved in polynomial time if and only if the METRS problem can be solved in polynomial time. Since the SRPQ problem has been proved to be NP-hard, METRS problem is also NP-hard.
\end{proof}

\section{Overview of WATTER Framework}
\label{sec:solution1}

\begin{figure*}[t!]\centering\vspace{-2ex}
	\scalebox{0.6}[0.6]{\includegraphics{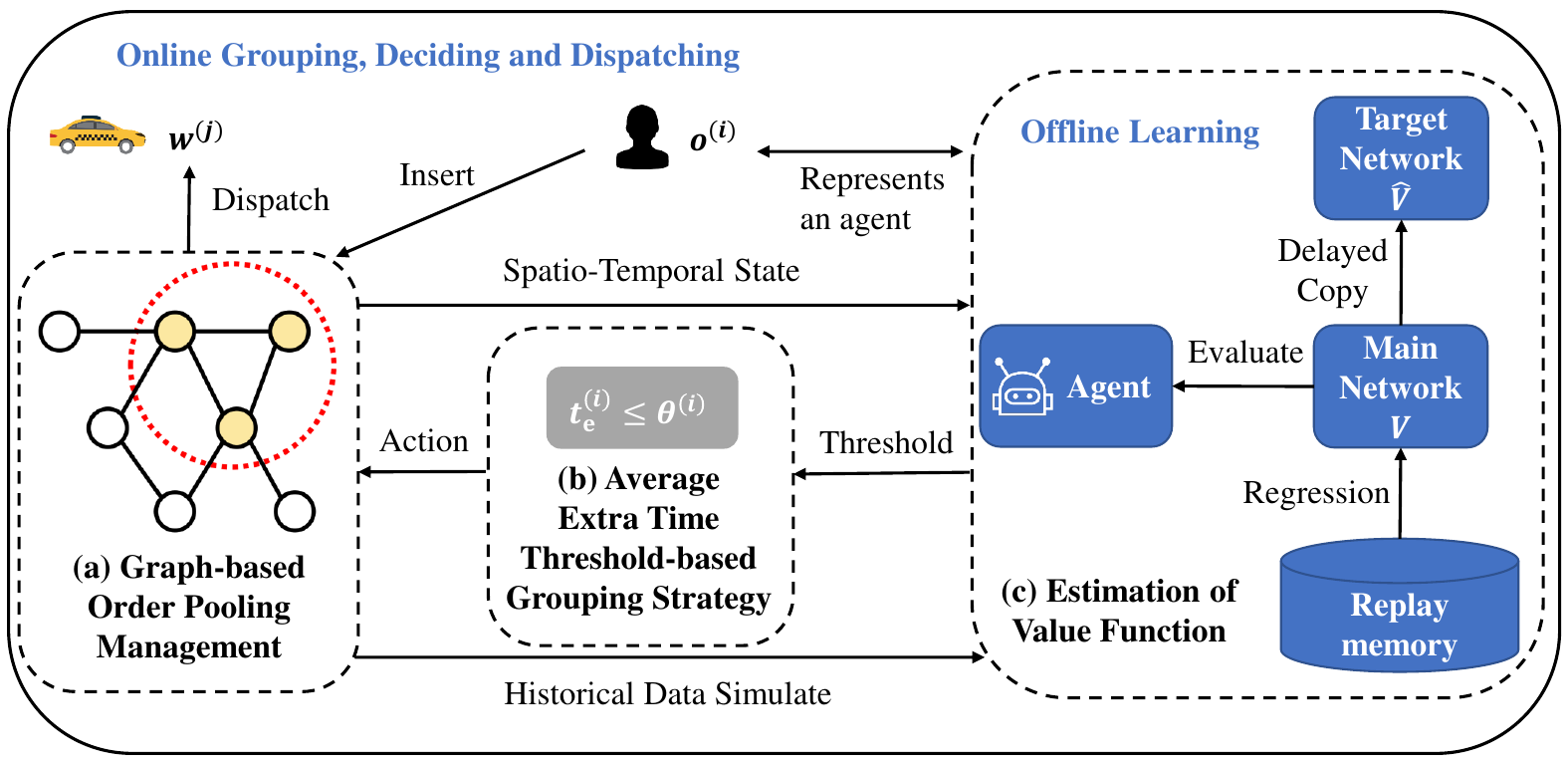}}
	\caption{\small Illustration of the WATTER framework.}
	\label{fig:overview}\vspace{-2ex}
\end{figure*}

We first introduce the three major parts of our WATTER framework: the order pooling management algorithm, the average extra time threshold-based grouping strategy and the estimation of value function for MDP reinforcement learning \cite{abbasi2019deeppool, shah2020adp, tang2021value} stage, as shown in Figure \ref{fig:overview}. During the online phase, we utilize a temporal shareability graph to dynamically manage temporal shareability relationships of orders. We conducted a theoretical analysis of the METRS problem to derive an average extra time threshold-based grouping strategy that utilizes the average extra time of each order and the customized threshold to make decisions. In the offline phase, we employ an MDP approach to estimate the value function, which is used as the threshold in the online decision-making process. 

\noindent \textit{(a) The Graph-based Order Pooling Management.}
Both online and batch approaches have the disadvantage of dispatching orders too quickly. Consequently, the matching pool is restricted to the currently available orders, ignoring possible future opportunities. To tackle these issues, we propose an order pooling management algorithm based on the temporal shareability graph that serves as a data structure to maintain a dynamic set of orders: each order is represented as a node, with edges connecting it to other orders that can be shared. This part involves maintaining the graph, identifying shareable order groups, and assign workers to order groups.

\noindent \textit{(b) The Average Extra Time Threshold-based Grouping Strategy.}
A simple but effective strategy is to dispatch an order when the extra time is below an expected threshold $\theta$. In reality, the expected threshold reflects the benefit that can be obtained from dispatching in the current spatiotemporal environment. We notice that a smaller threshold can lead to higher optimization results. However, a small threshold also prevents the order being grouped with more suitable orders.
In Section \ref{sec:solution2}, we analyze that through adjusting the expected threshold $\theta$, then we can obtain different optimization results. We transform the original METRS objective into a function of the expected threshold $\theta$, then we solve METRS by adjusting the expected threshold for each order. Fortunately, we prove that the reduced optimization objective possesses a convex shape, which is advantageous for solving. Then, we employ the gradient descent method to efficiently find the optimal threshold of $\theta$.

\noindent \textit{(c) The Estimation of Value Function.}
To reduce the optimization objective into a function on the expected threshold $\theta$, we previously utilized the distribution of extra times. However, the distribution may be impacted by spatiotemporal circumstances, such as peak periods, traffic congestion, supply of workers, and the demand for orders.  Given historical data, we can estimate the optimal expected threshold $\theta$~\cite{xu2018large, tang2021value}. 
We formulate the dispatch decision-making process as a Markov Decision Process (MDP), treating each order as an agent. Offline learning is achieved using components such as the main network, target network, and replay memory. However, the inherent learning process often faces challenges in converging effectively due to a lack of high quality learning experiences. To overcome this, we employ an off-policy training strategy that utilizes a threshold-based strategy to generate learning experiences and minimize the disparity between the result of MDP value function and the optimal expected threshold value. It allows us to fine-tune the threshold-based strategy based on different spatiotemporal environments, ultimately determining the desired expected threshold.

\section{Graph-Based Order Pooling Management}

\subsection{Temporal Shareability Graph}
\revision{
	In our WATTER framework, we enable orders to dynamically join or leave the order pool. However, existing batch-based methods ~\cite{bei2018bimatch, cheng2019queue, zeng2020gas} can only match orders within a batch, which cannot efficiently find cross-batch groups. To address this limitation, we introduce a temporal shareability graph as the order pool, which offers two key advantages. First, the match window can be dynamic and customized, unaffected by the batch size. Second, the pool maintains edges to represent the shareability relationships among orders, allowing efficiently filter out non-shareable order combinations, then to efficiently retrieve good sharing groups.
}

\begin{definition} (Temporal Shareability Graph)
	Given an order set $O$, the temporal shareability graph can be denoted by $\mathcal{G}  = (O, E)$. Each $o^{(i)} \in O$ represents a node in the graph, and the edge $e=(o^{(i)},o^{(j)}, \tau_e)$ $\in E$ represents that node $o^{(i)}$ and node $o^{(j)}$ can be shared in a group before timestamp $\tau_e$. 
\end{definition}

At any given timestamp, the snapshot of a graph provides the current shareability relationships among orders. Clique~\cite{zhang2019cliqueEnumeration} is a widely used cohesive subgraph structure for network analysis. A $k$-clique is a dense subgraph that has $k$ nodes, and where each pair of nodes are adjacent. Existing studies~\cite{chenyu2022pride, wuhan2022eride} have demonstrated that the shareability relationship is closely associated with the graph structure, as stated in Theorem \ref{theo:kclique}. Thus we can efficiently enumerate shareable groups by applying clique listing algorithms~\cite{yuanzhirong2022clique}.

\begin{theorem}
	Given a group $g$ containing $k$ orders, a feasible route $L$ can be generated only if the nodes corresponding to these $k$ orders in the shareability graph form a $k$-clique.
	\label{theo:kclique}
\end{theorem}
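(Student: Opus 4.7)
The plan is to prove the contrapositive-flavored implication directly: assume a feasible route $L$ exists for the $k$-order group $g$, and show that every pair of orders in $g$ is shareable, which by definition places an edge between each such pair in $\mathcal{G}$ and hence makes the $k$ nodes a $k$-clique.

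First I would fix an arbitrary pair $o^{(i)}, o^{(j)} \in g$ and construct a witness route $L_{ij}$ for the two-order subgroup by deleting from $L$ every location that is a pick-up or drop-off of an order in $g \setminus \{o^{(i)}, o^{(j)}\}$ while preserving the original order of the remaining four locations $l_p^{(i)}, l_d^{(i)}, l_p^{(j)}, l_d^{(j)}$. Then I would verify the three feasibility conditions of Definition~\ref{def:problem} on $L_{ij}$: the sequential constraint holds because the deletion preserves the relative order inherited from $L$; the capacity constraint holds trivially since fewer riders are on board at every moment than in $L$; and the deadline constraint holds because shortest-path travel costs satisfy the triangle inequality, so $T(L_{ij}^{(i)}) \le T(L^{(i)})$ and likewise for $o^{(j)}$, which means the already-satisfied deadlines in $L$ remain satisfied in $L_{ij}$.

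From the existence of a feasible $L_{ij}$ I would then invoke the definition of the temporal shareability graph to conclude that there is an edge $(o^{(i)}, o^{(j)}, \tau_e) \in E$ with $\tau_e$ at least as late as the current timestamp under consideration (the latest time at which both orders could still be dispatched together is governed by their individual deadlines, which are respected). Since the pair was arbitrary, every pair in the $k$-node vertex set is adjacent, yielding a $k$-clique.

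The main obstacle I expect is the triangle-inequality step: I need to argue rigorously that deleting intermediate pick-up/drop-off stops in $L$ cannot increase the cost of the sub-route between $l_p^{(i)}$ and $l_d^{(i)}$, because $cost(\cdot,\cdot)$ is defined as shortest travel time and the sum of shortest travel times along a sub-sequence is bounded by the sum along the full sequence. A secondary subtlety is the interpretation of the edge timestamp $\tau_e$: I would argue that as long as the dispatch time used in the feasibility check for the pair is no later than the dispatch time that validated $L$ for the whole group, the two-order route inherits its feasibility, which aligns with the temporal semantics of $\mathcal{G}$.
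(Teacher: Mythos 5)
Your proof is correct, and in fact it supplies more than the paper does: the paper never proves Theorem~\ref{theo:kclique} at all, but merely asserts it and defers to prior work on shareability graphs. Your blind reconstruction is the standard subgroup-feasibility argument underlying those cited results: restrict the feasible route $L$ to the subsequence of the four locations of an arbitrary pair $o^{(i)}, o^{(j)} \in g$, verify the three constraints of Definition~\ref{def:problem} for the restricted route $L_{ij}$, and conclude pairwise adjacency, hence a $k$-clique. Each step is sound: the sequential constraint is inherited because deletion preserves relative order; the deadline constraint follows from the triangle inequality for shortest-path costs ($cost(a,c) \le cost(a,b) + cost(b,c)$, which holds even on a directed road network), so collapsing intermediate stops and dropping a prefix can only decrease $T(L_{ij}^{(i)})$; and the worker who made $L$ feasible serves as the available worker that the definition of a shareable group demands for the pair. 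Two points deserve sharpening. First, your capacity step as phrased (``fewer riders are on board at every moment'') is loose, since the timing of $L_{ij}$ differs from that of $L$; the precise claim is per location: because the relative order of pick-ups and drop-offs is preserved, a rider of $o^{(i)}$ or $o^{(j)}$ is on board at a remaining location $l$ of $L_{ij}$ exactly when they were on board at $l$ in $L$, so the load at each location of $L_{ij}$ is dominated by the corresponding load in $L$ and hence by the vehicle capacity. Second, it is worth stating explicitly that you prove only the necessity direction, which is all the theorem claims---pairwise shareability does not compose, and a $k$-clique may still fail to admit a joint feasible route (deadlines and capacity can be violated jointly even when every pair is fine), which is precisely why the WATTER framework still generates and validates a route for each enumerated clique rather than dispatching cliques directly.
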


According to Theorem \ref{theo:kclique}, the shareable groups in the graph form cliques. The existing studies solely process the graph as a snapshot. However, our objective is to enable the graph to efficiently return the current k-cliques for decision-making. Thus, in this paper, we consider a \textit{temporal} shareability graph, which can supports insertions/deletions of nodes, and the expiration of edges. Given a group $g$ and a generated feasible route $L$, we use the $\tau_g$ to denote expiration time of the group, which is equal to the minimum slack time.

\begin{equation}
	\tau_g = \min_{o^{(i)}\in g} \tau^{(i)} - t^{(i)}  - T(L^{(i)}) - t_r^{(i)}
\end{equation}

\revision{
	The shareability graph is dynamically updated with the arrival and departure of orders. Nodes and corresponding edges in $\mathcal{G}$ are inserted or deleted accordingly. When a new order $o^{(i)}$ arrives, it is inserted as a new node. Then, we traverse $\mathcal{G}$ to find its neighbors that can be shared. \textit{The shareability relationship not only involves orders being close in terms of their release time, but also the proximity between pick-up and drop-off locations. Orders that are far apart in either time or space cannot be part of a feasible route.} When an order $o^{(j)}$ can be shared with $o^{(i)}$, we can find a group $g$ with feasible route $L$ with the minimal travel cost that can serve them together. A new edge $e=(o^{(i)}, o^{(j)}, \tau_g)$ is then inserted to denote the shareability relationship between the two orders.

	Note that the shareability graph only displays the shareability relationships among orders. It can be used to efficiently find shareable groups by enumerating $k$-cliques. Then we choose the group with the smallest average extra time among all possible shareable groups for each order as its best group. We use the best groups for decision-making and dispatching.
}

\subsection{The Order Pooling Management Algorithm}

To manage the online arriving orders, we introduce an order pooling management algorithm that processes orders in real-time. The orders are inserted into the order pool, which is maintained as a temporal shareability graph. Periodically, we check the current status of orders and dispatch them according to a specific strategy. If the order is not dispatched, it will stay in the pool, awaiting better grouping opportunities.

\revision{
	As shown in Algorithm \ref{algo:order_pool_algorithm}, we iteratively process new orders and insert them into the pool (lines 2-4). During the insertion process, we maintain the best group information of each order in $G_b$, which refers to the group that has the smallest average extra time among all the shareable groups that contain the order. We then remove any edges and groups that will expire after the current timestamp of system (lines 5-6). Next, we check all orders in the pool and retrieve the best group in the current spatiotemporal environment. Since we maintain a map of best group $G_b$ during pool updates, the time cost of retrieval operation is $O(1)$ (lines 8-9). Each order in the order pool has its own waiting time, but can also leave the pool early if certain conditions (e.g., Algorithm \ref{algo:expectation_based_make_decision} implements \textit{MakeDecision} ) are met (line 10). \textit{Note that here we use asynchronous periodic checks, which are performed periodically instead of after every insertion.} If we can find a valid group $g$, we assign it to the closest available worker (lines 11-13). If $o^{(i)}$ does not have a shareable group, it will remain in the pool and wait. If it exceeds the waiting time limit, we reject it (lines 14-16).
}

\begin{algorithm}[t]
	\DontPrintSemicolon
	\KwIn{A set $W$ of $m$ workers, a set $O$ of $n$ orders sorted by arriving time}
	\KwOut{The served groups $\mathbb{S}$ and the failed orders $\mathbb{F}$}
	\SetKwFunction{IsTimeout}{IsTimeout}\SetKwFunction{MakeDecision}{MakeDecision}
	initialize shareability graph $\mathcal{G}$ and best group map $G_b$\;
	\ForEach{new order $o^{(i)} \in O$} {
		insert $o^{(i)}$ into the pool\;
		update system current timestamp $t_s \leftarrow t^{(i)}$ \;
		
		\ForEach{$g, e$ that exipres after $t_s$} {
			remove orders in $g, e$ from the graph \;
		}
		
		$/*$ Asynchronous Periodicity Check Orders $*/$ \;
		\ForEach{$o^{(j)} \in \mathcal{G}$} {
			$g \leftarrow G_b[j] $  \;	
			\If{$g$ exists \textbf{and} \MakeDecision{$g$, $t^{(i)}$}} {
				assign the $g$ to a worker to serve. \;
				remove orders in group $g$ from the graph \;
				S.append($g$) \;
			} 
			\ElseIf{$o^{(j)}$ exceeds wait time limit} {
				remove order $o^{(j)}$ from the graph \;
				F.append($o^{(j)}$) \;
			}
		}
	}
	\Return{$\mathbb{S,F}$} \;
	\caption{Order Pooling Management Algorithm}
	\label{algo:order_pool_algorithm}
\end{algorithm}

In order to facilitate quick response of decision-maker, it is important to store the information of the current best group of each order in the pool. Each group corresponds to a $k$-clique in the shareability graph. When the graph is updated, the existing best groups may also need to be updated. There are four situations that can lead to updates of the graph: (1) order arrival (line 3); (2) order departure (line 12, 15); (3) edge expiration (line 6); and (4) group expiration (line 6). Due to the space limitation, please refer to Appendix A, B for the detail algorithms to handle them.

With the order pool, we enable customized waiting time for each order on the platform. A decision-maker is crucially needed to determine whether the current best group is sufficient for dispatch. For example, the \textit{online strategy} is to dispatch orders as early as possible, which notifies riders in the shortest time; the \textit{timeout strategy} is to dispatch orders as late as possible to obtain the best group opportunity.

\section{Average Extra Time Threshold-based Grouping Strategy}
\label{sec:solution2}

\subsection{Threshold-based Strategy}

As we discussed in Section \ref{sec:introduction}, existing solutions that respond to orders immediately or within a static mini-batch time can prevent riders from being grouped with potentially more suitable riders in the near future. Then, the platform may miss some good opportunities to reduce the extra time of riders. However, if the platform holds orders for too long, they may timeout. Thus, \textit{a smart decision strategy to hold or dispatch the orders is the key component of the platform, which is also the most challenging part of the METRS problem}.

To solve the decision problem, we can examine several case studies. For orders whose current best group is optimal, we can dispatch them. For orders that are difficult to group with others, we also need to dispatch them immediately, even if their current group quality is not optimal. Orders located in popular areas can continue to wait until better group results are available. Thus, whether to continue holding or dispatching immediately depends on the spatiotemporal environment of the orders, including their pick-up and drop-off locations, as well as their release time. 

We can use historical data to estimate the possible grouping results and calculate the expected threshold $\theta^{(i)}$ of extra time for each order $o^{(i)}$. We will introduce how to select a good $\theta^{(i)}$ for each order in Sections \ref{subsec:fittingOptimization} and \ref{sec:solution3}. The threshold $\theta^{(i)}$ can be considered as a reference: when the extra time $t_e^{(i)}$ of $o^{(i)}$ in a group arrangement is smaller than $\theta^{(i)}$, it means the group is better than the historical performance and $o^{(i)}$ should be dispatch with no more wait; otherwise, the order $o^{(i)}$ may wait for a better group in the future.

Based on the thresholds, we  propose a flexible decision strategy for holding or dispatching orders. As shown in Algorithm \ref{algo:expectation_based_make_decision}, we filter out the orders wait longer than the limit $\eta^{(i)}$ (lines 1-3). Those orders can be served when there are suitable workers, otherwise will be rejected. We use a strategy based on expected threshold to decide whether to dispatch. First, we calculate the average extra time of orders in group $g$ (line 4). Then, we use the average estimated threshold $\bar{\theta}$ as a reference to make the final decision (lines 5-6).

\begin{algorithm}[t]
	\DontPrintSemicolon
	\KwIn{An order group $g$, system current timestamp $t_s$}
	\KwOut{whether dispatching (True) or holding (False)}
	$t^{(i)} + \eta^{(i)}$ $\leftarrow$ the earliest timeout time of orders in $g$ \;
	\If{$t_s > t^{(i)} + \eta^{(i)}$} {
		\Return{True} \;
	}
	$\bar{t_e}$ $\leftarrow$ average extra time of order in $g$ \;	
	$\bar{\theta}$ $\leftarrow$ average expected threshold of orders in $g$\;	
	\Return{$\bar{t_e} \le \bar{\theta}$} \;
	\caption{Average Extra Time Threshold-based Grouping Strategy}
	\label{algo:expectation_based_make_decision}
\end{algorithm}

\subsection{Reduction of Problem}
In this section, we present a probability analysis of extra time. Our strategy is influenced not only by past orders, but also by the arrival probability of future ones. Therefore, we introduce the expectation of the original optimization objective $\mathbb{E}(\Phi(W,O))$ as our optimization goal, following the setting in many existing learning-based methods ~\cite{abbasi2019deeppool, tang2021value, ke2022RL4Delay}. 

For the objective function \ref{equ:problem} of the METRS problem $\Phi(W,O)$, its first part $\sum_{o^{(i)} \in O^+} t_e^{(i)}$ is the extra time for all served orders; its second part $\sum_{o^{(j)} \in O^-} p^{(j)}$ is the penalty received due to rejected orders. As we discussed in Definition \ref{def:problem}, $t_e^{(i)} \le p^{(i)}$ holds for all orders $o^{(i)}$. 
To minimize $\Phi(W,O)$, we need to find a suitable time to dispatch orders so that $t_e^{(i)}$ is as small as possible and the service rate $\mu = \frac{|O^+|}{|O|}$ of the platform is as high as possible. 

For each served order $o^{(i)}$, we introduce an indicator function $\mathbb{I}(i)$: $\mathbb{I}(i)=1$ means the group of  $o^{(i)}$ has average extra time smaller than its average expected threshold, that is $\bar{t_e} \le \bar{\theta}$; $\mathbb{I}(i)=0$ means $o^{(i)}$ has waited more than its limit wait time $\eta^{(i)}$. Thus, $O^+ = O^1 \cup O^0$, where $O^1$ is the set of the served orders with $\mathbb{I}(i)=1$ and $O^0$ is for the served orders with $\mathbb{I}(i)=0$. Then, the optimization objective can be rewritten as follows:\vspace{-3ex}

{\small\begin{align}
		&\min \mathbb{E}( \sum_{o^{(i)} \in O^+} t_e^{(i)} + \sum_{o^{(j)} \in O^-} p^{(j)} ) \notag\\
		\Rightarrow &\min \mathbb{E} ( \sum_{o^{(i)} \in O^1} t_e^{(i)}+ \sum_{o^{(k)} \in O^0} (t_d^{(k)} + \eta^{(k)} ) + \sum_{o^{(j)} \in O^-} p^{(j)} )  \notag\\
		\le  &\min \mathbb{E}( \sum_{o^{(i)} \in O^1} t_e^{(i)}+ \sum_{o^{(k)} \in O^0} p^{(k)} + \sum_{o^{(j)} \in O^-} p^{(j)} ) \notag\\ 
		\Rightarrow &\min  \mathbb{E} (\sum_{o^{(i)} \in O}\mathbb{I}(i)t_e^{(i)}+(1-\mathbb{I}(i))p^{(i)})
\end{align}}

When the order $o^{(i)}$ is dispatched by threshold (i.e., $\mathbb{I}(i)=1$), the platform incurs a loss of $t_e^{(i)}$, otherwise the platform at most incurs a loss of $p^{(i)}$ if $o^{(i)}$ waits more than limit or is rejected. Therefore, the optimization objective is to minimize the total loss incurred by all orders.

Intuitively, if we minimize the extra time $t_e^{(i)}$ for each served order, we can minimize $\Phi (W,O)$. By utilizing Algorithm \ref{algo:expectation_based_make_decision}, we have:
\begin{equation}
	\sum_{o^{(i)} \in g} t_e^{(i)} \le \sum_{o^{(i)} \in g} \theta^{(i)}= \bar{\theta} |g|
	\label{eq:strategy}
\end{equation}

\noindent where $\theta^{(i)} \in \Theta$ represents the expected threshold that we have selected for each order. We use $\bar{\theta}$ to denote average expected threshold for all orders in the current best shareable group. The extra time  $t_e^{(i)}$ may change as new orders join or old orders leave the pool. It cannot be directly constrained. However, because all orders are dispatched according to the dispatch strategy, we can establish an upper bound on the original optimization problem:\vspace{-2ex}

{\small\begin{align}
		& \min \mathbb{E} (\sum_{o^{(i)} \in O}\mathbb{I}(i)t_e^{(i)}+(1-\mathbb{I}(i))p^{(i)} ) \notag\\
		\le &  \min_{\Theta} \mathbb{E}(\sum_{o^{(i)} \in O}\mathbb{I}(i)\theta^{(i)}+(1-\mathbb{I}(i))p^{(i)})
\end{align}}

The reduced problem is a function of $\theta^{(i)}$ and $p^{(i)}$, where $p^{(i)}$ is determined by the order information and can be considered constant. Thus, the only variable we have is $\theta^{(i)}$. Under our strategy, larger threshold $\theta^{(i)}$ increases the probability of satisfying the decision condition and dispatching the order. We use $p(\mathbb{I}(i)=1)$ to denote the probability of the indicator function being 1. Let $f(x)$ be the probability density function of the distribution that $t_e^{(i)}$ follows, and let $F(x)$ be its cumulative distribution function. Then, we have:

{\small\begin{equation}
		p(\mathbb{I}(i)=1)=\int_0^{\theta^{(i)}}f(x)dx=F(\theta^{(i)}),
\end{equation}}

\noindent where $F(\theta^{(i)})$ measures the probability of each order $o^{(i)}$ being dispatched when its group's average extra time is smaller than its average expected threshold.

		Then, the problem can be rewritten as:
		{\small\begin{align}
				&  \min_{\Theta} \mathbb{E}(\sum_{o^{(i)} \in O}\mathbb{I}(i)\theta^{(i)}+(1-\mathbb{I}(i))p^{(i)})\notag \\
				\Rightarrow & \min_{\Theta} \sum_{o^{(i)} \in O} p(\mathbb{I}(i)=1)  \theta^{(i)}+(1-p(\mathbb{I}(i)=1))p^{(i)} \notag\\ 
				\Rightarrow & \min_{\Theta} \sum_{o^{(i)} \in O} F(\theta^{(i)})  \theta^{(i)} + ( 1 - F(\theta^{(i)}) ) p^{(i)} \notag\\
				\Rightarrow & \min_{\Theta} \sum_{o^{(i)} \in O} p^{(i)} - (p^{(i)} -  \theta^{(i)})F(\theta^{(i)}) \notag\\
				\Rightarrow & \max_{\Theta} \sum_{o^{(i)} \in O} (p^{(i)} - \theta^{(i)})F(\theta^{(i)})
				\label{eq:final_objective}
		\end{align}}

		Here, $p^{(i)}-\theta^{(i)}$ represents the minimum gain in the loss space that the platform can obtain after dispatching order $o^{(i)}$, which is a monotonically decreasing function of $\theta^{(i)}$. By setting $p^{(i)}=\tau^{(i)}- t^{(i)} - cost(l_p^{(i)},l_d^{(i)})$, then $p^{(i)}-\theta^{(i)}$ denotes the slack time of order in the group. A longer slack time means the order is dispatched to a more appropriate group in a shorter time. Since $F(\theta^{(i)})$ is a cumulative distribution function, it is monotonically increasing. Therefore, the product of these two functions must have a maximum value. In other words, the reduced objective function is a convex function of thresholds of orders. Our goal is to find the threshold $\theta^{(i)}$ that corresponds to the maximum value of this product function.

		\subsection{Distribution Fitting and Optimization} \label{subsec:fittingOptimization}
		As mentioned before, we treat the extra time $t_e$ as a random variable that follows a specific distribution. We then transform the optimization objective into a function of the expected threshold $\theta$. There are two key challenges: (1) determining the specific distribution of $t_e$, and (2) optimizing the objective function under the distribution assumption.
		
		\noindent \textbf{Distribution Fitting.} 
		Regarding the extra time of orders, we have several observations about its distribution: (1) Shorter orders may exhibit smaller extra time due to the difficulty of inserting detours along their routes; (2) Orders with both the pick-up and drop-off locations in popular areas may experience smaller extra time because there are more opportunities for sharing with similar orders; (3) Orders released during peak hours may result in smaller extra time as a large number of proper orders can be the group candidates. Thus, we can reasonably assume that the extra time of orders can be clustered, with many orders falling into the same sub-intervals. However, obtaining prior knowledge about the specific distributions for these clusters and the confidence level with which they adhere to these distributions is challenging.
		
		For random variables influenced by multiple factors, Gaussian Mixture Models (GMM) ~\cite{1999GMM} are commonly used. GMM incorporates multiple Gaussian distributions and combines them based on weights, making it suitable for fitting complex distributions. In the case of the extra time $t_e$, we can consider each influencing factor mentioned earlier as a sub-component of the GMM, collectively contributing to the overall distribution. The fitting of GMM can be accomplished using the Expectation-Maximization (EM) algorithm ~\cite{2001EM}, which is widely employed for this purpose.
		
		\noindent \textbf{Objective Optimization.} 
		Our reduced optimization objective is a convex function of the thresholds $\theta^{(i)}$ of orders $o^{(i)}$, ensuring the existence of a maximum value. 
		For convex functions and approximate convex functions, various methods such as Newton's method or Gradient Descent can be employed to find the optimal. While these methods may encounter local optima, for the optimization objective in this paper, only a few iterations are required to obtain the solution.
		
		As shown in Algorithm \ref{algo:distribution_fitting_and_optimization}, we firstly utilize historical data of the extra time as the distribution to be fitted. We then employ the EM algorithm to fit a Gaussian Mixture Model to this historical data (line 1). After obtaining the fitted distribution, we can easily calculate its cumulative distribution function $F$ (line 2). For each individual order $o^{(i)}$, we can express the function $g$ as $g(\theta) = (p^{(i)} - \theta)$ based on its penalty term (line 3-4). By combining these two functions, we can use existing optimization methods (e.g., Gradient Descent) to find the optimal expected threshold $\theta^{(i)}$ (line 5-6).

		\begin{algorithm}[t]
			\DontPrintSemicolon
			\KwIn{order set $O$, historical records of extra time $H$}
			\KwOut{optimal expected thresholds $\Theta$}
			
			$M$ $\leftarrow$ the GMM fitting result on $H$\;
			$F$ $\leftarrow$ the CDF of $M$ \;
			
			\ForEach{$o^{(i)} \in O$}{
				$g(\theta) = (p^{(i)} - \theta)$\;
				$\theta^{(i)}$ $\leftarrow$   $\argmin_{\theta} F(\theta)*g(\theta)$ \;
				$\Theta$ $\leftarrow$ $\Theta \cup \{\theta^{(i)} \}$ \;
			}
			
			\Return{$\Theta$} \;
			\caption{Distribution Fitting and Optimization}
			\label{algo:distribution_fitting_and_optimization}
		\end{algorithm}

		To summarize, we reduce the METRS problem to maximize a function of $\theta^{(i)}$. The core challenge of the reduced problem is estimating the expected threshold $\theta^{(i)}$ for each order $o^{(i)}$. Then, we utilize a Gaussian Mixture Model for distribution fitting and employ Gradient Descent to optimize the reduced objective. However, in practical situations, the optimal group of orders is usually dispatched quickly, then leads to no enough samples to accurately fit its distribution. While we can make assumptions about the distribution of $t_e$, the obtained solution is only a \textit{coarse-grained} result. The dispatch strategy is still heavily influenced by the spatiotemporal environment. Moreover, it is difficult to get the distribution information of new arrived orders in the future.

\section{Reinforcement Learning based Estimation}
\label{sec:solution3}

To overcome the shortcomings in Section \ref{subsec:fittingOptimization}, we can estimate the expected threshold based on a reinforcement learning approach \textit{offline} with historical data. Combined with the \textit{online} maintenance of the current best group, the dispatch strategy can be \textit{fine-grained} tailored for each order.

\subsection{Dispatch Strategy as MDP}

For the orders in the pool, they wait and go through multiple decisions until are either dispatched or expire. 
In this paper, we propose to model this process as a Markov Decision Process (MDP) from a local view  ~\cite{puterman1990markov,ke2022RL4Delay, abbasi2019deeppool}, where each individual order is modeled as an agent. The MDP captures the sequential decision process as an agent that observes the current \textit{environment}, takes an \textit{action} $a$, and \textit{transits} from state $s_t$ to $s_{t + \Delta t}$, then receives a certain \textit{reward} $r$. Here, we use $s_t$ to denote the time state, such as $s_0$ representing the initial state at time $0$. Next, we introduce the details of the \textit{state}, \textit{action}, \textit{state transition} and \textit{reward} of our MDP model.

\noindent \textit{State}. Follow the design in the existing study~\cite{ke2022RL4Delay}, we consider multiple spatio-temporal features in state $s_t$. These spatio-temporal features consist of two components: the basic feature and the environmental feature. The basic feature contains information such as the order's pick-up location and release timestamp. We use \textit{location index} and \textit{time index} to quantize the basic spatio-temporal feature of the order into and a number of regions and timeslots. The location index is obtained by dividing the examined city area into $n\times n$ region grids, which is commonly used in existing studies~\cite{ke2022RL4Delay, jiabao2022gridtuner}. The time index is obtained by dividing the time into intervals of $\Delta t$ seconds. The region information of the pick-up and drop-off locations of the order is represented as a vector $s_L$ using one-hot encoding. The timeslots of the order's release and wait are connected and fed into a two-dimensional vector $s_T$.

The environmental features include the current demand and supply distribution of the platform. We consider the current demand distribution of existing orders on the platform, including the distribution of pick-up locations and drop-off locations represented by vector $s_O$. The supply distribution of workers in each region is represented using vector $s_W$. The distribution vectors are all calculated by the location index. Combining this data, the spatiotemporal environmental state can be written as $s_t = [s_L, s_T, s_O, s_W]$.

\noindent  \textit{Action}. For each decision phase, there are two types of actions that an agent can perform. The \textit{dispatch} action with $a=1$ involves grouping the order and finding a worker to serve it. Specifically, the agent considers that its corresponding order has matched the desired group and can leave the pool. Another action is  \textit{wait} with $a=0$. Specifically, the \textit{wait} action means the agent thinks its corresponding order will be matched to a better group in the future.

\noindent  \textit{State Transition}. The sequential decision process involves waiting actions during the life cycle of an order. A wait action triggers a transition to the next state with the same position but a different time slot and environment. In the case of dispatch or expiration, the order will terminate its life cycle and receive a final reward. The dispatch action assigns the order to a worker along with the current best group, while expiration occurs in situations where the order has been waiting in the pool for too long and cannot be finished before its destination deadline. Expiration also can occur implicitly in decision process and is unobservable to the agent. Since the rider becomes more impatient, the order may be canceled at any time, which is also considered as an expiration for simplification.

\noindent  \textit{Reward}. The definition of reward $r_t$ depends on the optimization objective of the platform, which in this paper is transformed into Equation \ref{eq:final_objective}. Let $T$ be the time order to be dispatched or rejected. The key target to solve MDP is to learn the \textit{state value function} $V_\pi(s_t)=E(\sum_{i=1}^{T-t}\gamma^{i-1}r_{t+i})$ under the current strategy $\pi$ ~\cite{tang2021value, xu2018large, ke2022RL4Delay}, where $\gamma$ is the discount factor that controls how far the agent looks into the future for rewards. By using MDP to model the dispatch decision, the expected extra time threshold is corresponds to the ``state-value function''. The strategy $\pi$ used here is Algorithm \ref{algo:expectation_based_make_decision}. That is, $\theta^{(i)}=p^{(i)} - V_\pi(s^{(i)}_t)$, where $s^{(i)}_t$ is the spatio-temporal environment of the order $o^{(i)}$. The Bellman Update ~\cite{tang2021value, xu2018large} corresponding to each of the two different actions are
\begin{small}
	\begin{equation}
		V_\pi(s^{(i)}_t) \leftarrow \left\{
		\begin{aligned}
			&p^{(i)} - t_d^{(i)}  & a =1   \\
			&-\Delta t + \gamma^{\Delta t} V_\pi(s^{(i)}_{t+\Delta t})(1-\mathbb{I}(expired)) & a=0 
		\end{aligned}
		\right.\notag
	\end{equation}
\end{small}
The \textit{wait} action (i.e., $a = 0$) may result in two types of rewards: (1) continuing to wait in the next round, which has an immediate reward of $-\Delta t$ with indicator $\mathbb{I}(expired) = 0$. Only in this case, the state-value function is related to future states and rewards. (2) being expired, which has a reward of $0$ with indicator $\mathbb{I}(expired) = 1$. When the agent takes the \textit{dispatch} action (i.e., $a = 1$), the final reward will be the sum of a positive penalty reward $p^{(i)}$ and a negative detour time reward $t_d^{(i)}$ of the order in the current best group.

Although we set an immediate reward for each action, the actual reward accumulates over each decision phase. It can be calculated as $\sum_{t = 0}^{t_r^{(i)} / \Delta t} - \gamma^{t}\Delta t$. By setting the discount factor $\gamma=1$, the reward of expired order accumulated over the decision phases is equal to \revision{the negative waiting response time $- t_r^{(i)}$. Regarding dispatched orders,} the reward is calculated as the slack time $p^{(i)} - t_e^{(i)}$. A higher slack time indicates a higher quality group for the order. Therefore, the agent is trained to prioritize better group quality (i.e., more slack time) and to avoid order expiration (i.e., negative reward).

The accumulated reward $R$ can be calculated as follows:
\begin{small}
	\begin{equation}
		R(s_0^{(i)})= \left\{
		\begin{aligned}
			& -t_r^{(i)} + p^{(i)} - t_d^{(i)} = p^{(i)} - t_e^{(i)}  & dispatched   \\
			& -t_r^{(i)} = -\max t_r^{(i)} & expired 
		\end{aligned}
		\right.
	\end{equation}
\end{small}
where $s_{0}^{(i)}$ is the initial state of agent representing order $o^{(i)}$. \revision{Note that when using the value function in Algorithm \ref{algo:expectation_based_make_decision}, we calculate $\theta^{(i)}$ as $p^{(i)} - V_\pi(s_t^{(i)})$. It's because we estimate $V_\pi(s_t^{(i)})$ as $p^{(i)} - \theta^{(i)}$. }

\subsection{Deep-$Q$-Network Learning Approach}

The aim of MDP is to maximize the expectation of the accumulated reward for each agent. If an agent takes too many wait actions, its correspond order may expire and leads to a negative reward.  When dealing with decisions, the agent will carefully consider whether a wait action is needed in each decision phase to obtain a higher cumulative reward. Therefore, the total benefit of MDP is consistent with our optimization objective Equation \ref{eq:final_objective}.

We use the the neural network $V(s_t^{(i)})$ to represent the state-value function $V_\pi(s_t^{(i)})$, which is the estimation of the expected extra time threshold in Equation \ref{eq:strategy}. Then, we decide whether to dispatch and ask agents to get the maximum benefit during the training phase. As a result, the learned state-value function is the optimal expected threshold $\theta$. To achieve this, following existing value-based methods such as \textit{Q}-Learning~\cite{watkins1992q, hasselt2010double} and Deep \textit{Q}-Networks (DQN)~\cite{hasselt2016doubleQL, ke2022RL4Delay}, we use a \textit{replay memory} $M$ to store the experience tuples. There are two networks: (1) the \textit{main network} $V$, used to estimate the value function;  (2) the \textit{target network} $\hat{V}$, which is a delayed copy of $V$ and is used to stabilize the training process. Mean-squared Temporal-Difference (TD) Error is a commonly used loss function to estimate the value function: 
\begin{equation}
	loss_{td} (s) = (r_t + \gamma\hat{V}(s^{(i)}_{t + \Delta t}) - V(s_t^{(i)}))^2. \notag
\end{equation}

However, the TD loss alone is insufficient to meet our expectations for estimating the expected threshold, because the TD loss only guarantees the value relationship between states. We also require the value function to have a relationship with the  extra time $t_e$ so that it can be directly utilized in the threshold-based strategy. Therefore, we introduce the target loss to learn the threshold.
\begin{equation}
	loss_{tg} (s) = (p^{(i)} - \theta^{(i)} - V(s_t^{(i)}))^2, \notag
\end{equation}
where $\theta^{(i)}$ is the optimal threshold obtained through distribution fitting and optimization in Section \ref{sec:solution2}.

The TD loss guides the agent in selecting the action, either waiting or dispatching, that has a higher overall reward. This ultimately helps the value function estimate better thresholds. Meanwhile, the target loss aims to align the DQN's learning outcomes with the existing strategy, allowing for fine-tuning through actions based on the established policy. Consequently, our final loss is a weighted sum of these two parts.
\begin{equation}
	loss_{(s_t, a, s_{t + \Delta t}, r_t) \in M} (s) = \omega loss_{td} + (1 - \omega) loss_{tg}, \notag
\end{equation}
where $\omega$ is a weight parameter to balance $loss_{td}$ and $loss_{tg}$.

\section{Experimental Study}
\label{sec:experimental}

In this section, we present the experimental setup and results of our algorithms.

\subsection{Experimental Setup}

\begin{table}[t]\vspace{2ex}
	\begin{center}
		{\small\scriptsize
			\caption{\small Experimental Settings.} \label{tab:settings}
			\begin{tabular}{l|l}
				{\bf \qquad \qquad \quad Parameters} & {\bf \qquad \qquad \qquad Values} \\ \hline \hline
				\multirow{2}{*}{the number $m$ of riders} 
				& \quad (NYC) \quad \  50K, 75K,\textit{100K}, 125K\\ 
				& (CDC, XIA) 30K, \textit{40K}, 50K, 60K \\
				the number $n$ of workers  & 3K, 4K, \textit{5K}, 6K\\
				the deadline scale parameter $\tau$  & 1.2, 1.4, \textit{1.6}, 1.8 \\
				the maximum capacity of vehicles $K_w$ & 2, \textit{3}, 4, 5\\
				the balance parameter  $\alpha $ and $ \beta$  & \textit{1}\\
				\hline
			\end{tabular}
		}
	\end{center}
\end{table}

\noindent \textit{Data Set.}
We evaluate our algorithm on three real-world city datasets. The first is a public dataset collected from yellow taxis~\cite{nyc_dataset} in New York City (NYC), USA. The second and third are order datasets collected from the GAIA platform of Didi Chuxing~\cite{didi} in Chengdu (CDC) and Xi'an (XIA), China. We use order data from 1 to 06 and 08 to 31 July, 2013 for NYC, from 1 to 29 November, 2016 for CDC and from 1 to 30 October 2016 for XIA to train the value function proposed in Section \ref{sec:solution2}. In the experiments for evaluating parameters, we use order data from 07 July, 2013 in New York, 30 November, 2016 in Chengdu and 31 October 2016 in Xi'an. Each order in the dataset contains the longitude and latitude of the pick-up and drop-off locations as well as the release time of the order. The specific experiment-related parameters are shown in Table \ref{tab:settings} (the italic values are the default  parameters).

\noindent \textit{Implementation.} 
\revision{
	We simulate ridesharing in our framework using the following settings. Follow the grid index construction methods in existing studies~\cite{cheng2019queue, ke2022RL4Delay, jiabao2022gridtuner}, we partition the city into several cells to serve as a grid index to speed up workers and riders search. We tested the performance impact of different grid size and choose $ 10 \times 10$ cells as the setting of grid index. We set the watching window for each order (e.g., $\eta^{(i)} = \eta * cost(l_p^{(i)}, l_d^{(i)})$) and choose $\eta = 0.8$ as the default value. We also choose time slot size $\Delta t = 10$ (seconds) as the default value. (For a detailed analysis, please refer to Appendix D, F, G. For the implementation of the reinforcement learning model, please refer to Appendix C, E.} 

We treat each record as an order with one passenger and thus a $k$-clique in our framework can represent a group of orders with k riders. In addition, we set the deadline for each order (e.g., $\tau^{(i)} = t^{(i)} + \tau * cost(l_p^{(i)}, l_d^{(i)})$), which is a frequently used setup in numerous previous studies~\cite{tong2018prunegdp, xu2019insertion, zeng2020gas}. To achieve fairness, we generate the maximum passenger capacity parameter $K_w$ and the pick-up position for workers for the three datasets based on specific distributions. We uniformly sample initial locations for workers using the distribution of orders' pick-up locations. The vehicle capacity $k^{(j)}$ of worker is uniformly sampled within the range $[2, K_w]$.

All experiments are implemented in C++ and compiled using -O3 optimization. The experiments are conducted on a single server equipped with a Xeon Silver 4214 CPU@2.20GHz and 128 GB RAM. All algorithms run on a single thread.

\noindent \textit{Compared Algorithms.}
We compare our algorithm \textit{\textit{WATTER-expect}} with the following algorithms:
\begin{itemize}[leftmargin=*]
	\item \textit{\textit{WATTER-online.}} (this paper) A variant baseline implemented in the Order Pooling Management Algorithm that uses online strategy for dispatching orders. Each order is dispatched as early as possible. 
	\item \textit{\textit{WATTER-timeout.}} (this paper) Another variant baseline implemented in the Order Pooling Management Algorithm that uses timeout strategy for dispatching orders. Each order is dispatched as late as possible. 
	\item \textit{\textit{GDP}}~\cite{tong2018prunegdp}. An online-based algorithm where orders can only use information from existing orders. It greedily tries to insert the pick-up and drop-off locations of orders into the worker's route. 
	\item \textit{\textit{GAS}}~\cite{zeng2020gas}. A batch-based algorithm where orders within a batch are processed together. It generates an additive tree for all orders that can be served by each worker and finds the group with the maximum utility in the tree to dispatch.
\end{itemize}

\noindent \textit{Measurements. }
All algorithms are evaluated in terms of \textit{Extra Time(s)}, \textit{Unified Cost}~\cite{tong2018prunegdp}, \textit{Service Rate(\%)} ($|O^+| / |O|$) and \textit{Running Time(s)}. The \textit{Unified Cost} $UC$ is calculated by the sum of worker cost and penalty for rejected orders. Following the existing study ~\cite{tong2018prunegdp}, we set the balance parameter as $1$ in $UC$, and the penalty of each order as $10 \times cost(l_p^{(i)}, l_d^{(i)})$. The \textit{Running Time(s)} is the average algorithm running time of each order. Except for \textit{Extra Time(s)}, other three metrics are widely used in the existing large-scale ridesharing studies~\cite{tong2018prunegdp, zeng2020gas, wang2020demand}. We early terminate the algorithms that not completed experiments within 24 hours.

\subsection{Experimental Results}

\noindent \textit{Impact of Varying Number of Riders. } Figure \ref{fig:var_rider} presents the results of varying the number of orders. In all datasets, our proposed algorithms WATTER-expect, WATTER-online, and WATTER-timeout are more effective (i.e., lower unified cost and extra time) than existing approaches with the increase of the number of orders. 
For instance, when $n=50k$, WATTER-expect achieved 12.2\%, 18.4\%, 35.7\% and 40.1\% lower extra time compared to WATTER-online, WATTER-timeout, GAS and GDP in the CDC dataset, respectively.

\begin{figure}[t!]
	\subfigure{
		\scalebox{0.22}[0.21]{\includegraphics{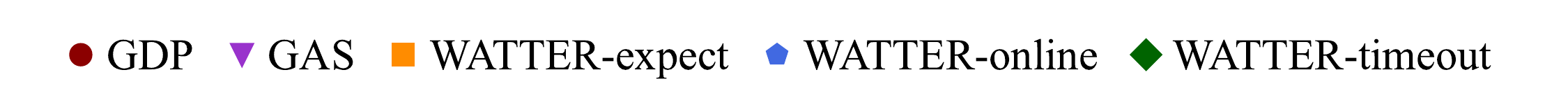}}}\hfill\\\vspace{-4ex}
	\addtocounter{subfigure}{-1}
	
	\subfigure[][{\scriptsize Extra Time(NYC)}]{
		\scalebox{0.22}[0.22]{\includegraphics{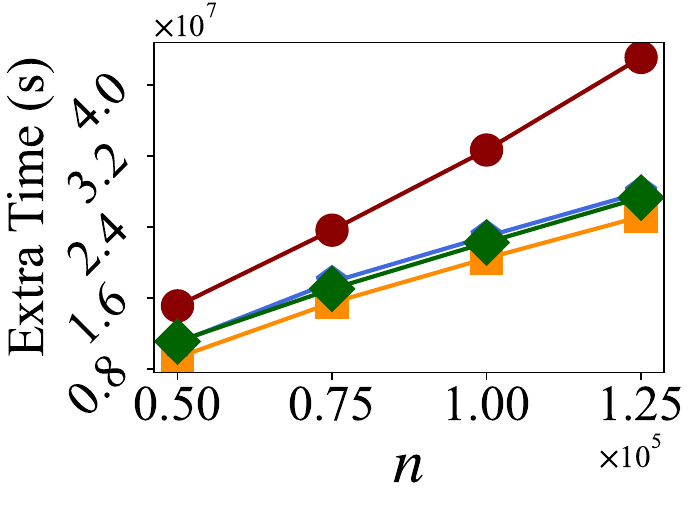}}
		\label{subfig:request_size_test_extra}}
	\hfill
	\subfigure[][{\scriptsize Extra Time(CDC)}]{
		\scalebox{0.22}[0.22]{\includegraphics{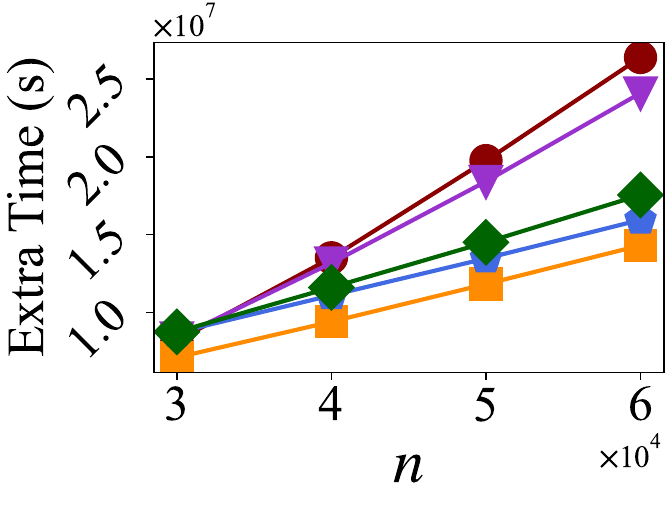}}
		\label{subfig:request_size_chengdu_extra}}
	\hfill
	\subfigure[][{\scriptsize Extra Time(XIA)}]{
		\scalebox{0.22}[0.22]{\includegraphics{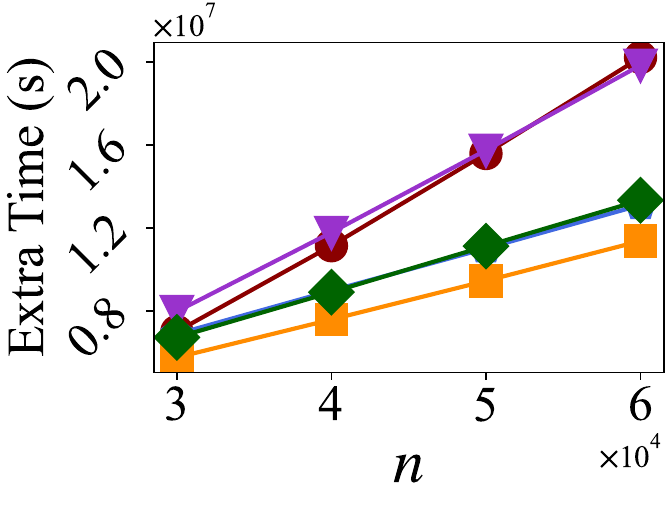}}
		\label{subfig:request_size_xian_extra}}\vspace{-2ex}
	
	\subfigure[][{\scriptsize Unified Cost(NYC)}]{
		\scalebox{0.22}[0.22]{\includegraphics{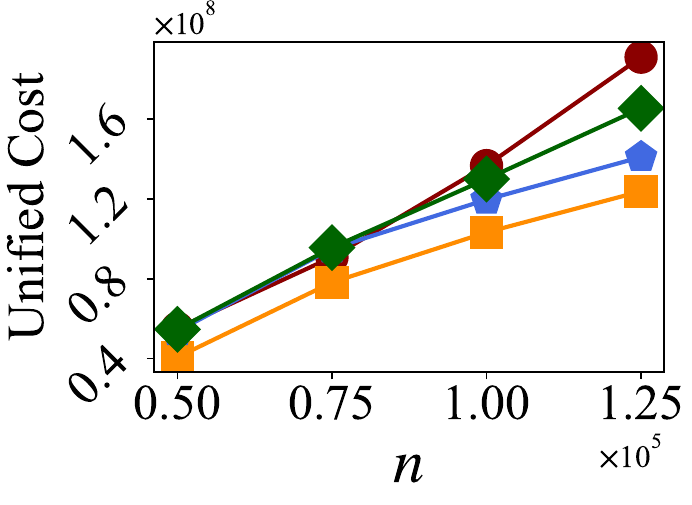}}
		\label{subfig:request_size_test_cost_t}}
	\hfill
	\subfigure[][{\scriptsize Unified Cost(CDC)}]{
		\scalebox{0.22}[0.22]{\includegraphics{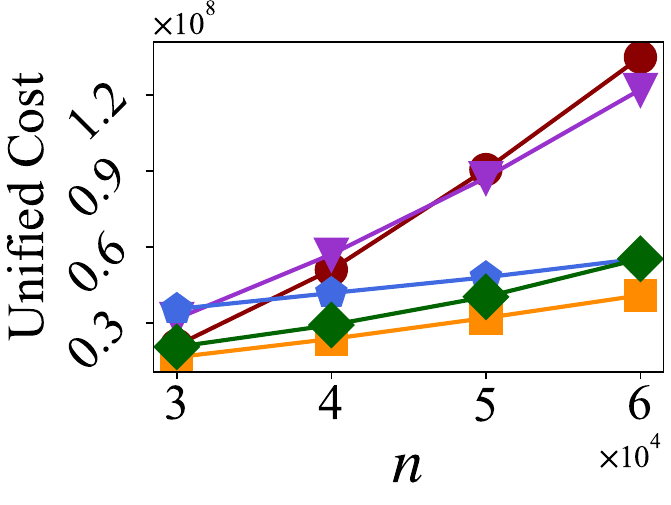}}
		\label{subfig:request_size_chengdu_cost_t}}
	\hfill
	\subfigure[][{\scriptsize Unified Cost(XIA)}]{
		\scalebox{0.22}[0.22]{\includegraphics{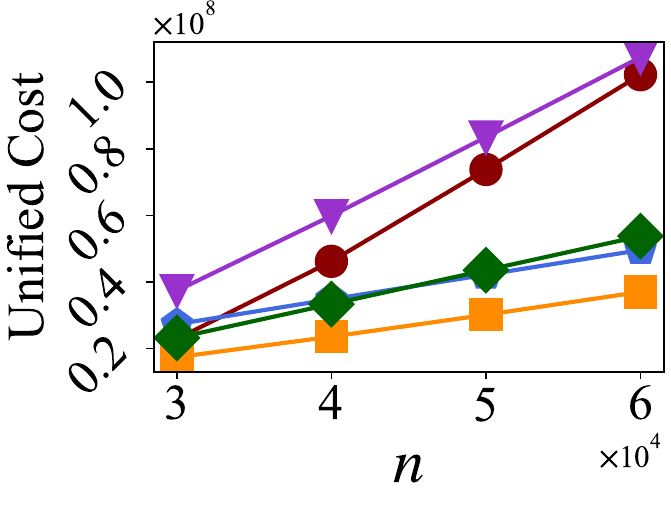}}
		\label{subfig:request_size_xian_cost_t}}\vspace{-2ex}
	
	\subfigure[][{\scriptsize Service Rate(NYC)}]{
		\scalebox{0.22}[0.22]{\includegraphics{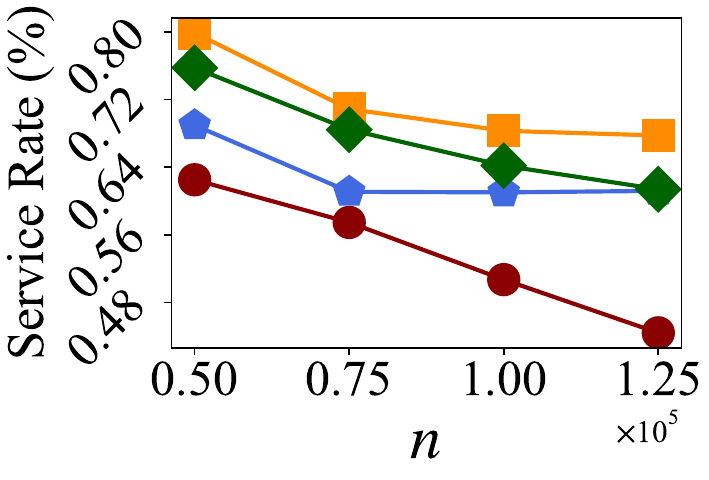}}
		\label{subfig:request_size_test_served_rate}}
	\hfill
	\subfigure[][{\scriptsize Service Rate(CDC)}]{
		\scalebox{0.22}[0.22]{\includegraphics{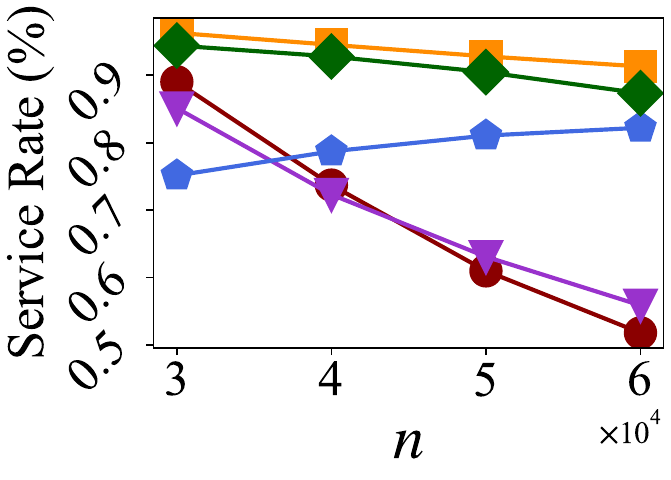}}
		\label{subfig:request_size_chengdu_served_rate}}
	\hfill
	\subfigure[][{\scriptsize Service Rate(XIA)}]{
		\scalebox{0.22}[0.22]{\includegraphics{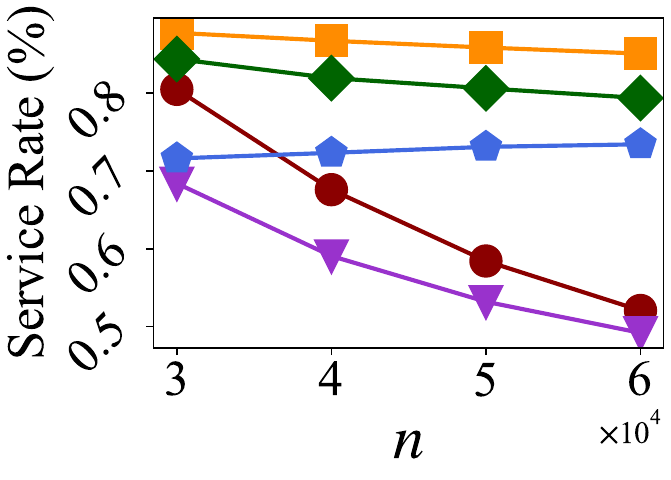}}
		\label{subfig:request_size_xian_served_rate}}\vspace{-2ex}
	
	\subfigure[][{\scriptsize Running Time(NYC)}]{
		\scalebox{0.22}[0.22]{\includegraphics{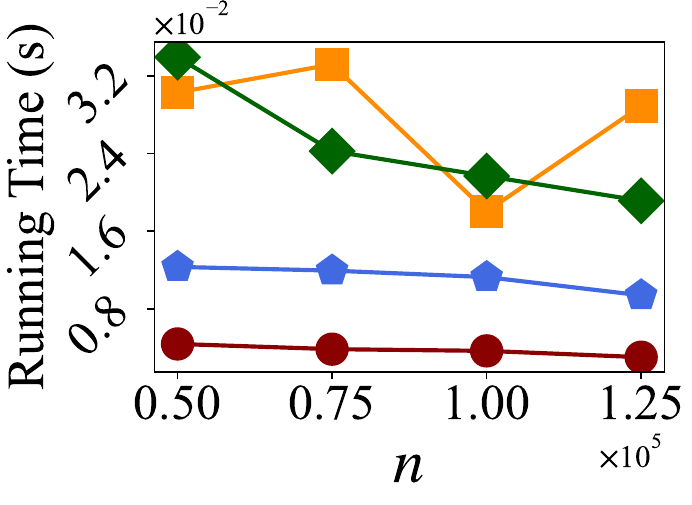}}
		\label{subfig:request_size_test_running_time}}
	\hfill
	\subfigure[][{\scriptsize Running Time(CDC)}]{
		\scalebox{0.22}[0.22]{\includegraphics{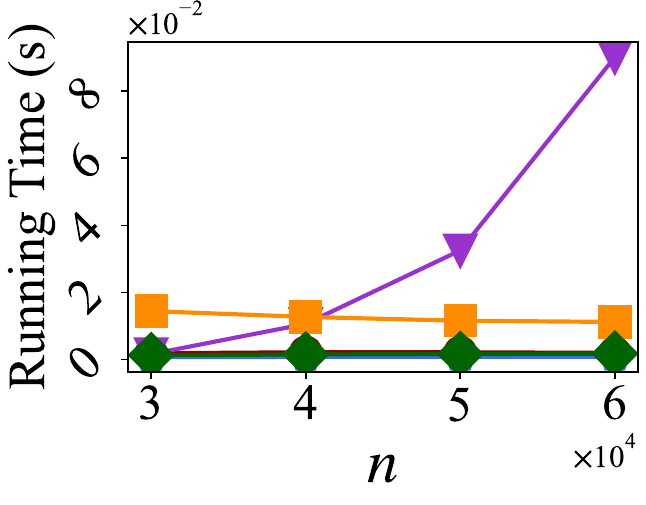}}
		\label{subfig:request_size_chengdu_running_time}}
	\hfill
	\subfigure[][{\scriptsize Running Time(XIA)}]{
		\scalebox{0.22}[0.22]{\includegraphics{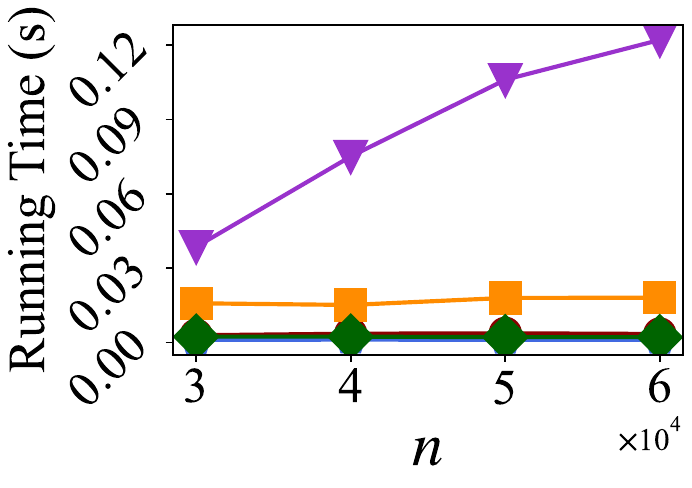}}
		\label{subfig:request_size_xian_running_time}}
	\caption{\small Performance of varying $n$.}\figureBelowMargin
	\label{fig:var_rider}
\end{figure}

Regarding service rate, our WATTER framework can outperform GDP and GAS. For instance, when $n=50k$, WATTER-expect achieved 2.3\%, 11.7\%, 36.9\% and 41.0\% improvement in service rate compared to WATTER-timeout, WATTER-online, GAS, and GDP in the CDC dataset, respectively. As for running time, GDP is the fastest algorithm due to its greedy insertion without enumerating possible order groups. GAS, however, has an exponentially increasing time cost due to an increase in the number of orders, thus is the slowest and cannot finish within 24 hours on NYC dataset. Among the WATTER algorithms, WATTER-online is faster than WATTER-timeout because it maintains a small enough order pool to minimize the insertion and update costs of new orders. Due to the need to use neural network, the running time of WATTER-expect is second-highest for most cases.

\begin{figure}[t!]
	\subfigure{
		\scalebox{0.22}[0.21]{\includegraphics{legend.eps}}}\hfill\\\vspace{-4ex}
	\addtocounter{subfigure}{-1}
	
	\subfigure[][{\scriptsize Extra Time(NYC)}]{
		\scalebox{0.22}[0.22]{\includegraphics{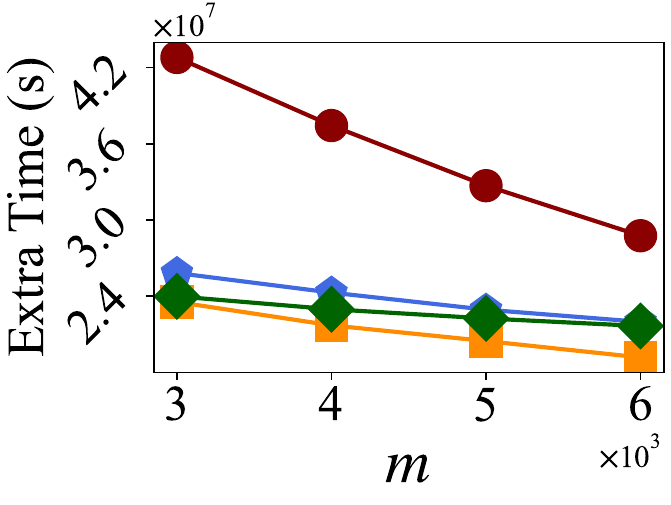}}
		\label{subfig:workers_test_extra}}
	\hfill
	\subfigure[][{\scriptsize Extra Time(CDC)}]{
		\scalebox{0.22}[0.22]{\includegraphics{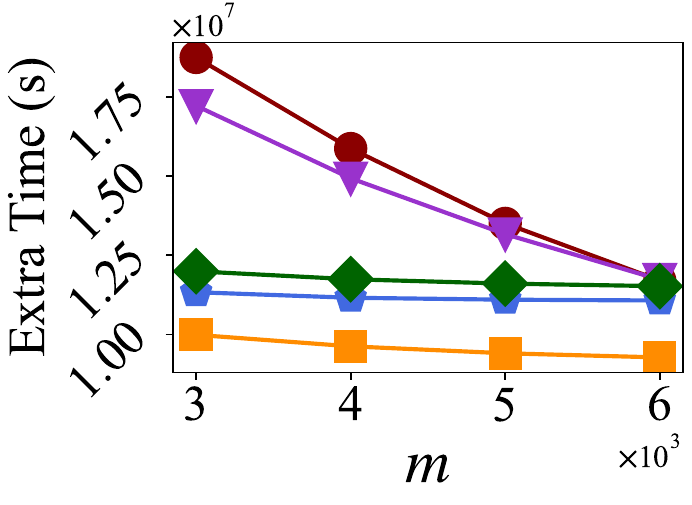}}
		\label{subfig:workers_chengdu_extra}}
	\hfill
	\subfigure[][{\scriptsize Extra Time(XIA)}]{
		\scalebox{0.22}[0.22]{\includegraphics{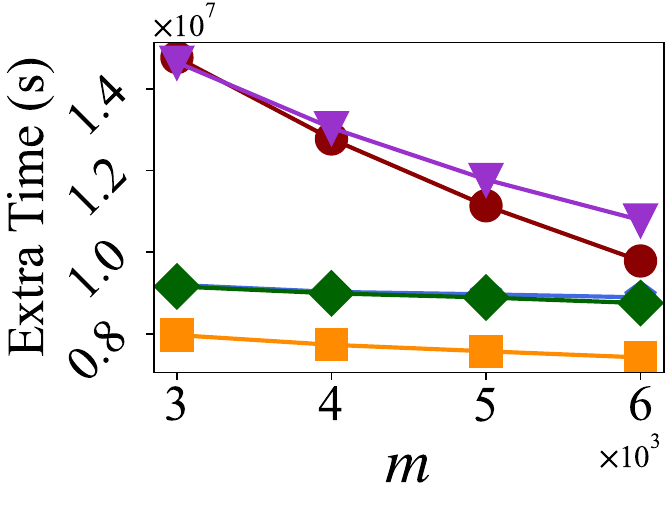}}
		\label{subfig:workers_xian_extra}}\vspace{-2ex}
	
	\subfigure[][{\scriptsize Unified Cost(NYC)}]{
		\scalebox{0.22}[0.22]{\includegraphics{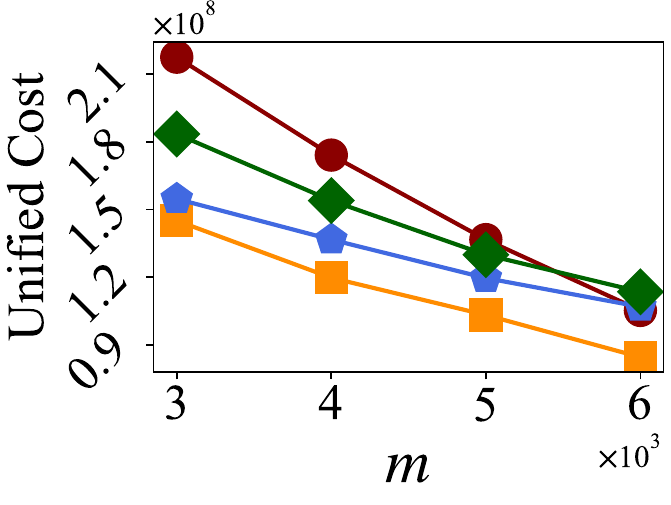}}
		\label{subfig:workers_test_cost_t}}
	\hfill
	\subfigure[][{\scriptsize Unified Cost(CDC)}]{
		\scalebox{0.22}[0.22]{\includegraphics{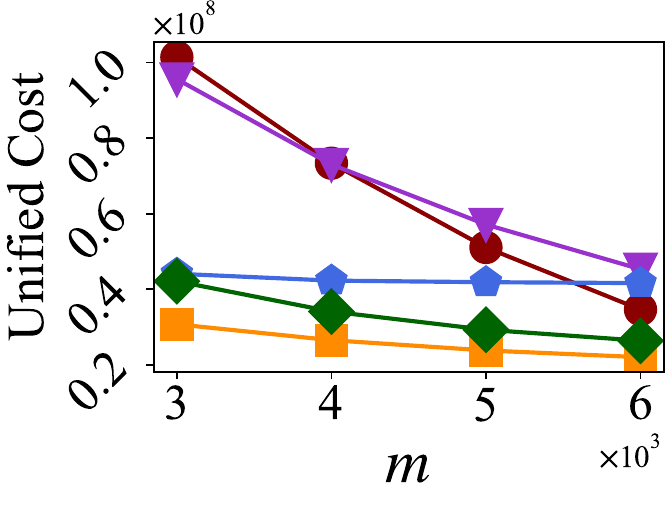}}
		\label{subfig:workers_chengdu_cost_t}}
	\hfill
	\subfigure[][{\scriptsize Unified Cost(XIA)}]{
		\scalebox{0.22}[0.22]{\includegraphics{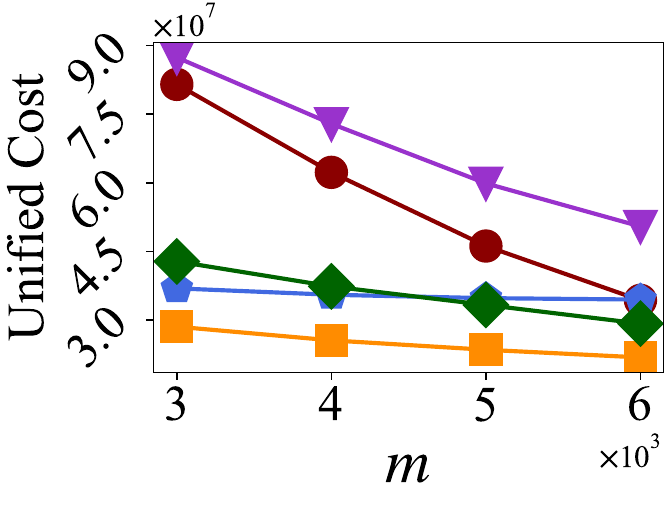}}
		\label{subfig:workers_xian_cost_t}}\vspace{-2ex}
	
	\subfigure[][{\scriptsize Service Rate(NYC)}]{
		\scalebox{0.22}[0.22]{\includegraphics{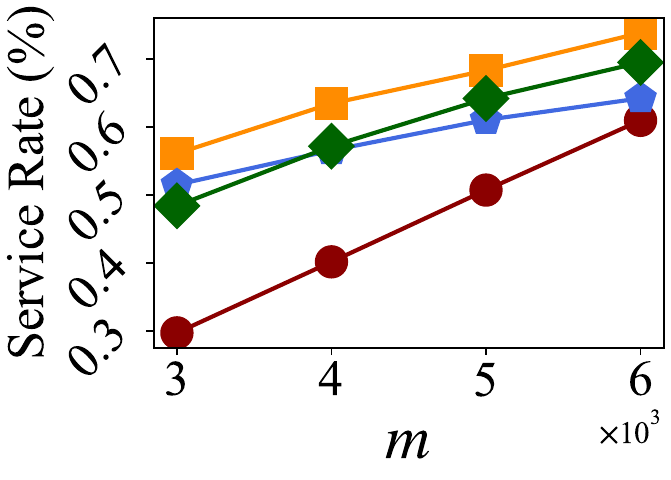}}
		\label{subfig:workers_test_served_rate}}
	\hfill
	\subfigure[][{\scriptsize Service Rate(CDC)}]{
		\scalebox{0.22}[0.22]{\includegraphics{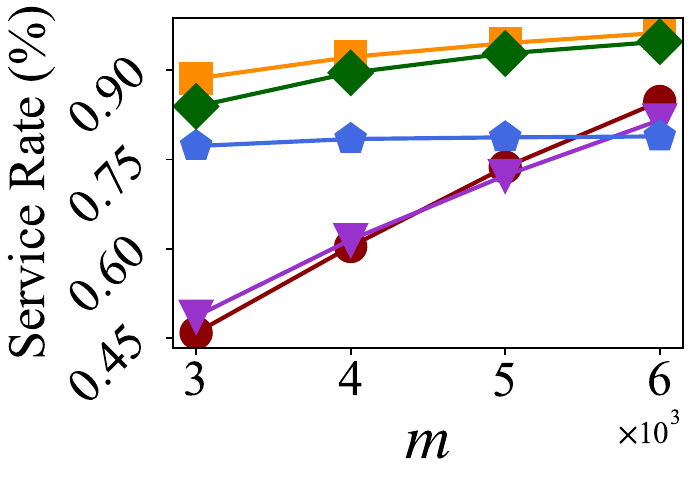}}
		\label{subfig:workers_chengdu_served_rate}}
	\hfill
	\subfigure[][{\scriptsize Service Rate(XIA)}]{
		\scalebox{0.22}[0.22]{\includegraphics{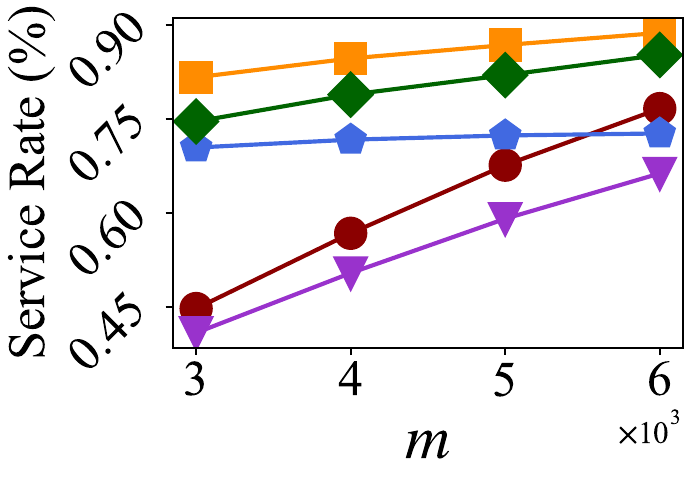}}
		\label{subfig:workers_xian_served_rate}}\vspace{-2ex}
	
	\subfigure[][{\scriptsize Running Time(NYC)}]{
		\scalebox{0.22}[0.22]{\includegraphics{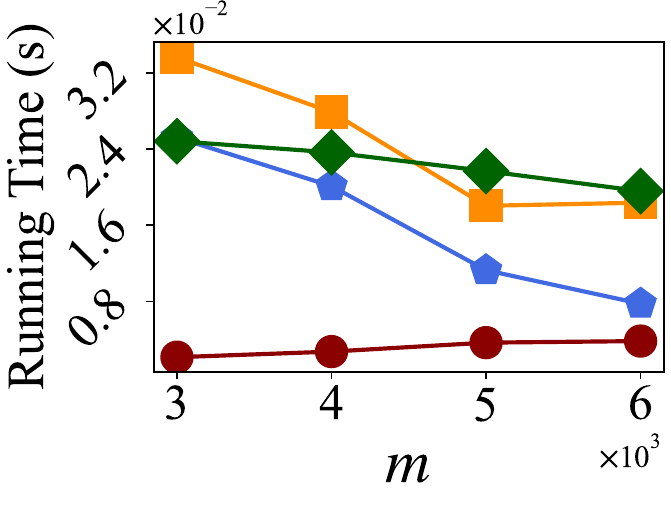}}
		\label{subfig:workers_test_running_time}}
	\hfill
	\subfigure[][{\scriptsize Running Time(CDC)}]{
		\scalebox{0.22}[0.22]{\includegraphics{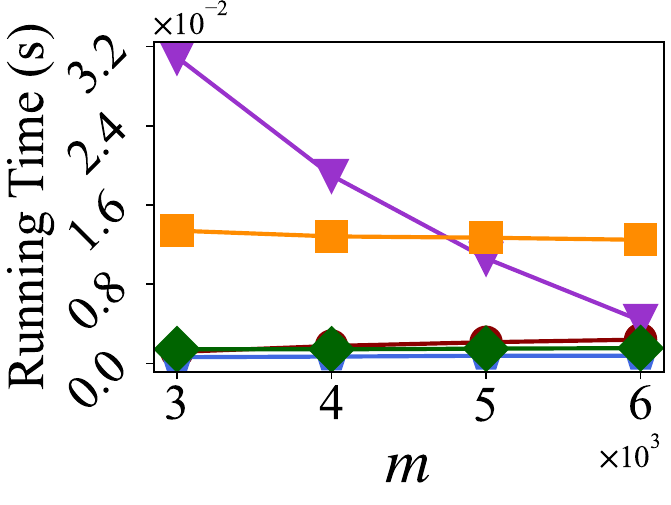}}
		\label{subfig:workers_chengdu_running_time}}
	\hfill
	\subfigure[][{\scriptsize Running Time(XIA)}]{
		\scalebox{0.22}[0.22]{\includegraphics{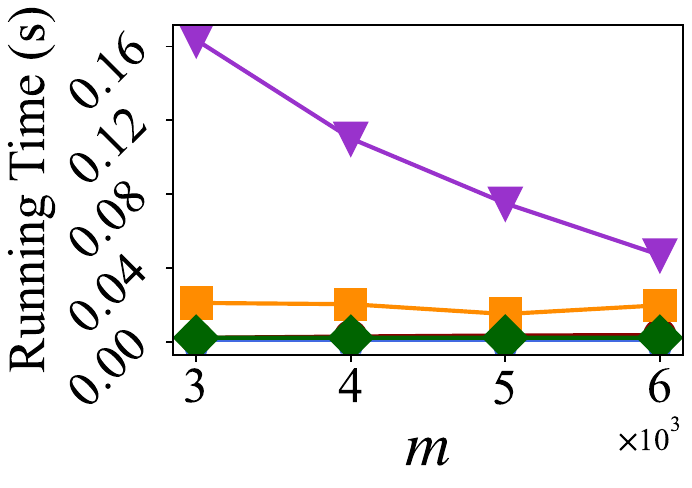}}
		\label{subfig:workers_xian_running_time}}
	\caption{\small Performance of varying $m$.}\figureBelowMargin
	\label{fig:var_worker}
\end{figure}
\noindent \textit{Impact of Varying Number of Workers. } Figure \ref{fig:var_worker} presents the results of varying the number of workers. Across all datasets, our algorithm delivers the best performance in all metrics except for running time. As the number of workers increases, both the extra time and the unified cost of all tested algorithms decrease. This is because it becomes easier to find an available worker closer to the order group, reducing the worker's response time and detour and increasing the service rate. For instance, in the NYC dataset, when $m=6000$, WATTER-expect outperformed WATTER-timeout, WATTER-online, and GDP, achieving a 4.3\%, 9.6\%, and 12.8\% improvement in service rate, respectively. Notably, the performance of the WATTER-online method exhibits less variation with increasing drivers on the CDC and XIA datasets. This is attributed to the fact that the orders in these two datasets have more dispersed pick-up and drop-off locations compared to the NYC dataset, where most orders are concentrated in the Manhattan area. As a result, the main limitation of WATTER-online is the difficulty in finding suitable shareable group.

\begin{figure}[t!]
	\subfigure{
		\scalebox{0.22}[0.21]{\includegraphics{legend.eps}}}\hfill\\\vspace{-4ex}
	\addtocounter{subfigure}{-1}
	
	\subfigure[][{\scriptsize Extra Time(NYC)}]{
		\scalebox{0.22}[0.22]{\includegraphics{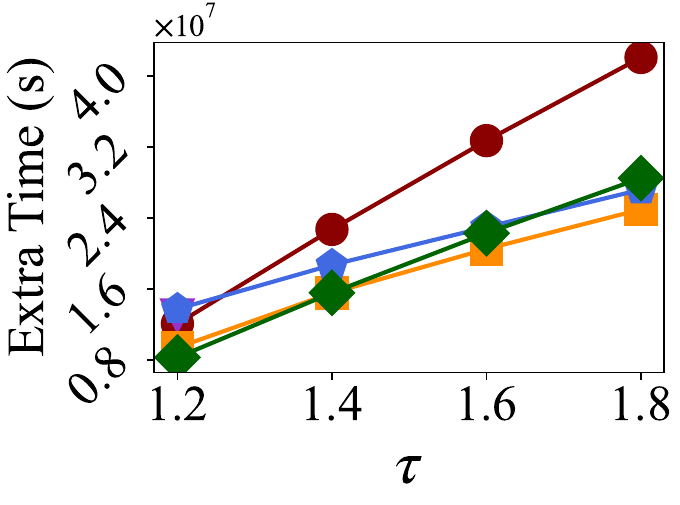}}
		\label{subfig:ddl_test_extra}}
	\hfill
	\subfigure[][{\scriptsize Extra Time(CDC)}]{
		\scalebox{0.22}[0.22]{\includegraphics{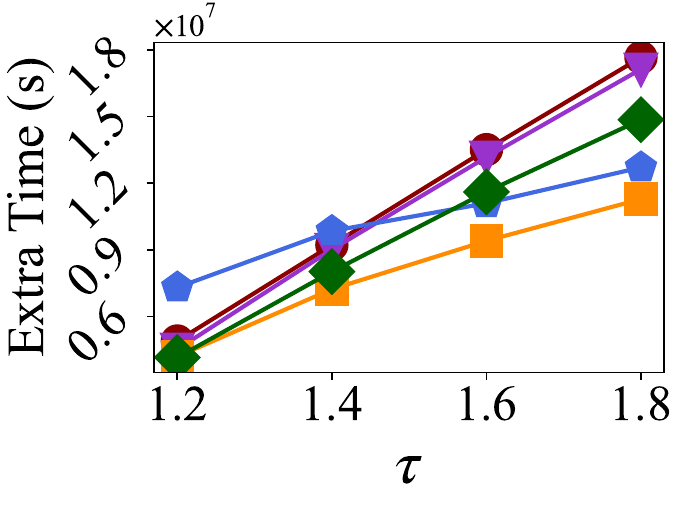}}
		\label{subfig:ddl_chengdu_extra}}
	\hfill
	\subfigure[][{\scriptsize Extra Time(XIA)}]{
		\scalebox{0.22}[0.22]{\includegraphics{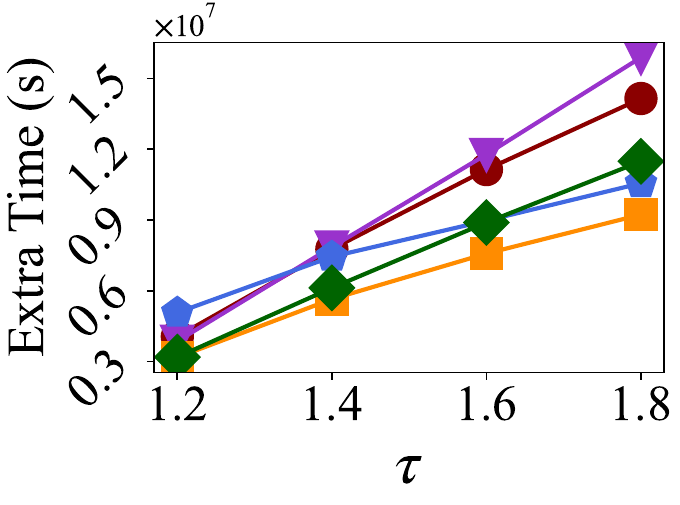}}
		\label{subfig:ddl_xian_extra}}\vspace{-2ex}
	
	\subfigure[][{\scriptsize Unified Cost(NYC)}]{
		\scalebox{0.22}[0.22]{\includegraphics{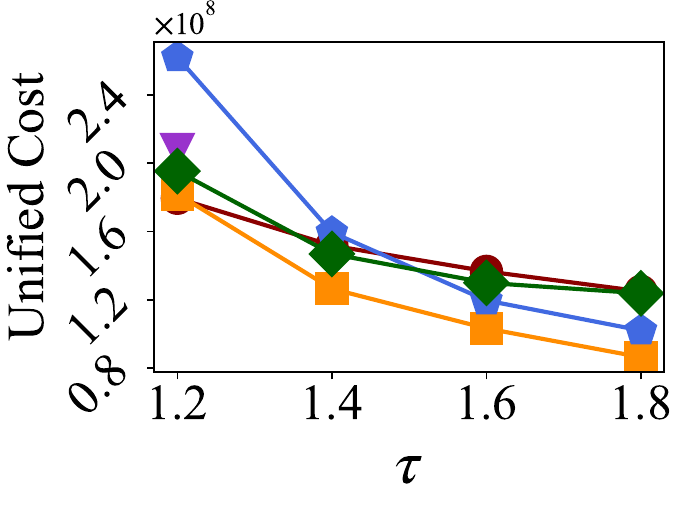}}
		\label{subfig:ddl_test_cost_t}}
	\hfill
	\subfigure[][{\scriptsize Unified Cost(CDC)}]{
		\scalebox{0.22}[0.22]{\includegraphics{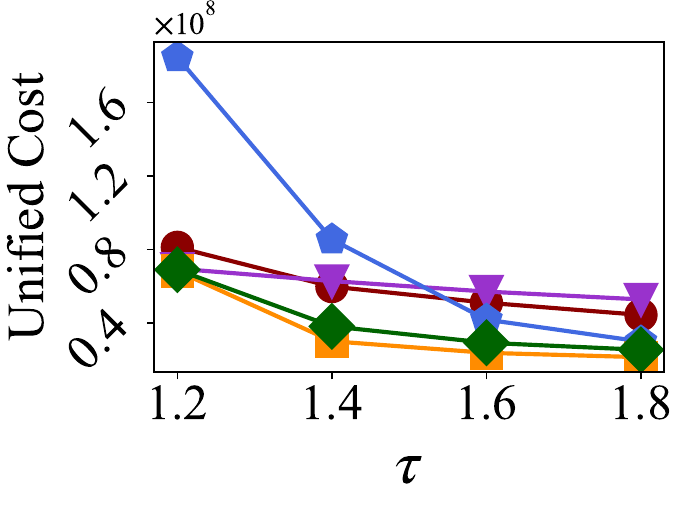}}
		\label{subfig:ddl_chengdu_cost_t}}
	\hfill
	\subfigure[][{\scriptsize Unified Cost(XIA)}]{
		\scalebox{0.22}[0.22]{\includegraphics{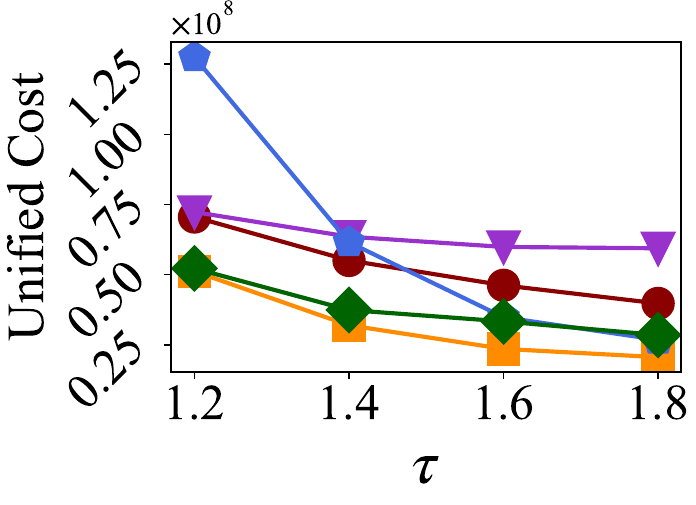}}
		\label{subfig:ddl_xian_cost_t}}\vspace{-2ex}
	
	\subfigure[][{\scriptsize Service Rate(NYC)}]{
		\scalebox{0.22}[0.22]{\includegraphics{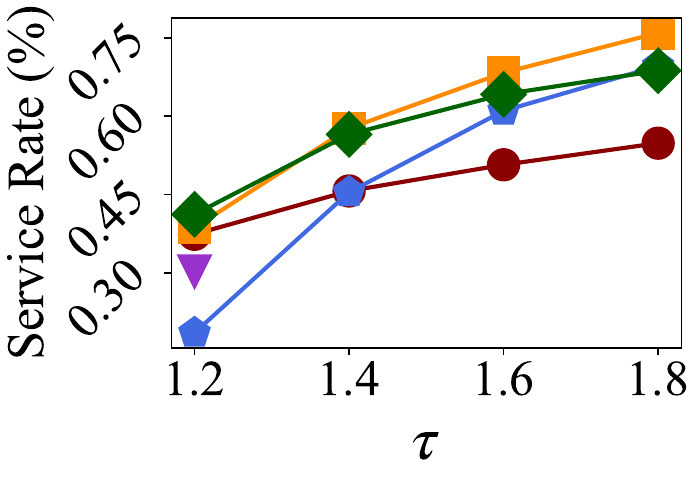}}
		\label{subfig:ddl_test_served_rate}}
	\hfill
	\subfigure[][{\scriptsize Service Rate(CDC)}]{
		\scalebox{0.22}[0.22]{\includegraphics{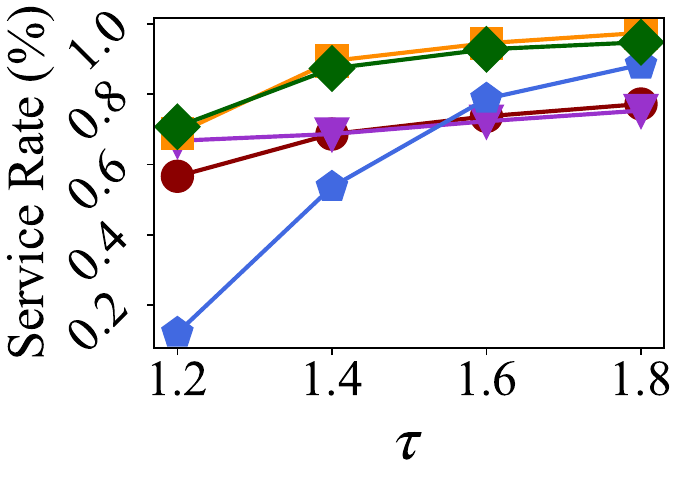}}
		\label{subfig:ddl_chengdu_served_rate}}
	\hfill
	\subfigure[][{\scriptsize Service Rate(XIA)}]{
		\scalebox{0.22}[0.22]{\includegraphics{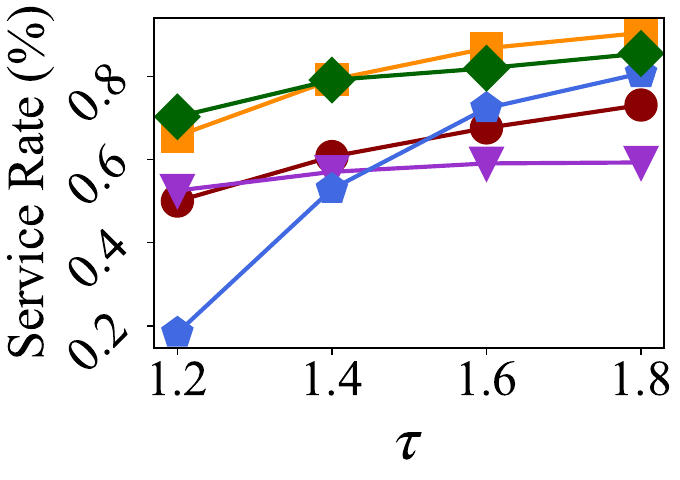}}
		\label{subfig:ddl_xian_served_rate}}\vspace{-2ex}
	
	\subfigure[][{\scriptsize Running Time(NYC)}]{
		\scalebox{0.22}[0.22]{\includegraphics{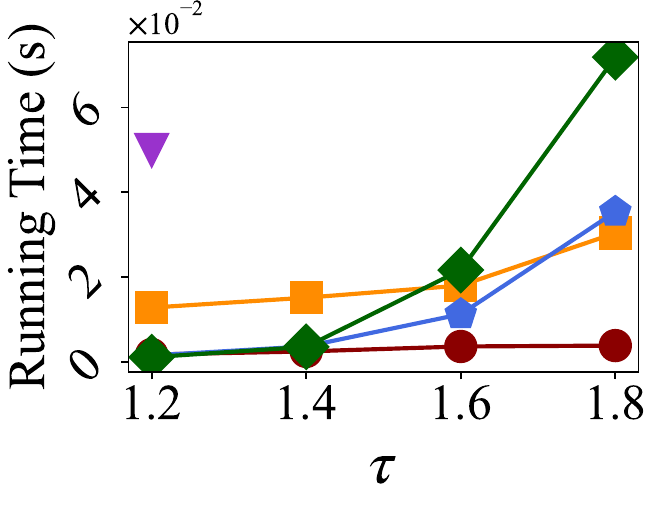}}
		\label{subfig:ddl_test_running_time}}
	\hfill
	\subfigure[][{\scriptsize Running Time(CDC)}]{
		\scalebox{0.22}[0.22]{\includegraphics{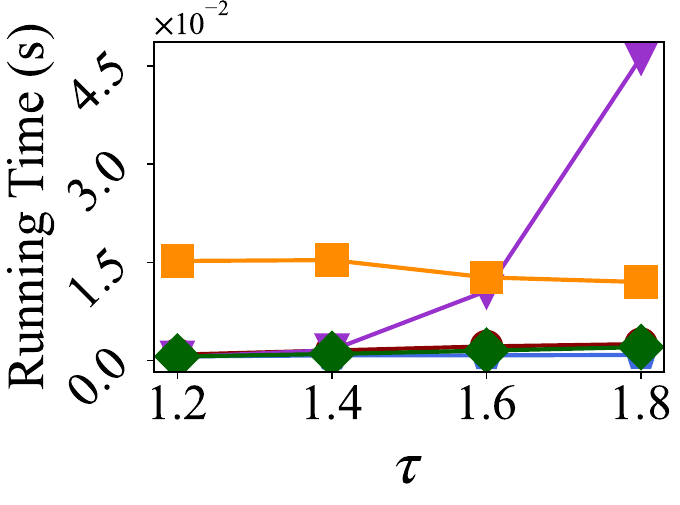}}
		\label{subfig:ddl_chengdu_running_time}}
	\hfill
	\subfigure[][{\scriptsize Running Time(XIA)}]{
		\scalebox{0.22}[0.22]{\includegraphics{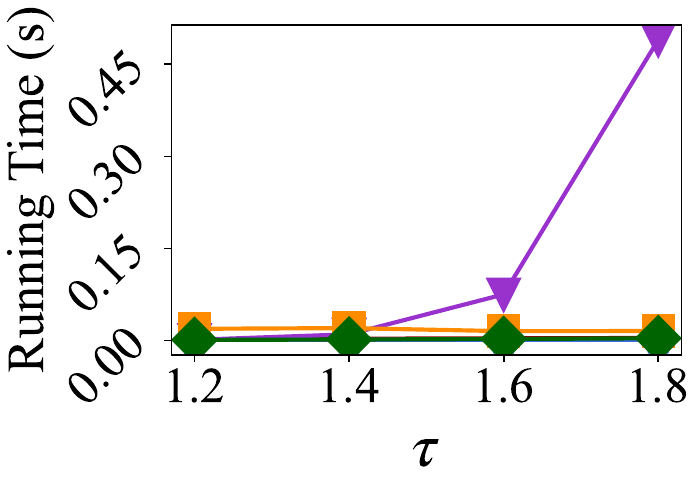}}
		\label{subfig:ddl_xian_running_time}}
	\caption{\small Performance of varying $\tau$.}\figureBelowMargin
	\label{fig:var_ddl}
\end{figure}

\revision{
	\noindent \textit{Impact of Varying Deadline.} Figure \ref{fig:var_ddl} presents the results of varying the orders' deadlines. Under small deadlines, the WATTER algorithms proposed in this paper show little difference from GDP and GAS in terms of extra time. It's because small deadlines do not allow orders to wait for too long. However, as the deadline increases, the WATTER-expect outperforms GDP and GAS. For example, when $\tau = 1.8$, WATTER-expect has a decrease of 23.1\%, 27.7\%, 48.2\%, and 65.3\% on unified cost compared to the other four algorithms on the XIA dataset, respectively.
	
	The longer deadlines improve the service rate of GDP by making it easier to insert orders into workers' routes. However, GDP fails to provide better groups due to the lack of utilization of future opportunities. For GAS, the unified cost is the highest in most cases because it only considers the payment of orders when selecting groups, without taking into account the workers' cost. However, longer deadlines increase the possibility of group orders. Furthermore, the longer wait time for riders allows more workers to become available, enhancing effectiveness of our algorithms. We observed that WATTER-online shows the most significant improvements across all performance metrics when varying the deadline. It's because the bottleneck of WATTER-online lies in the challenge of immediately finding shareable groups. Increasing the deadline greatly reduces the difficulty. }

\begin{figure}[t!]
	\subfigure{
		\scalebox{0.22}[0.21]{\includegraphics{legend.eps}}}\hfill\\\vspace{-4ex}
	\addtocounter{subfigure}{-1}
	
	\subfigure[][{\scriptsize Extra Time(NYC)}]{
		\scalebox{0.22}[0.22]{\includegraphics{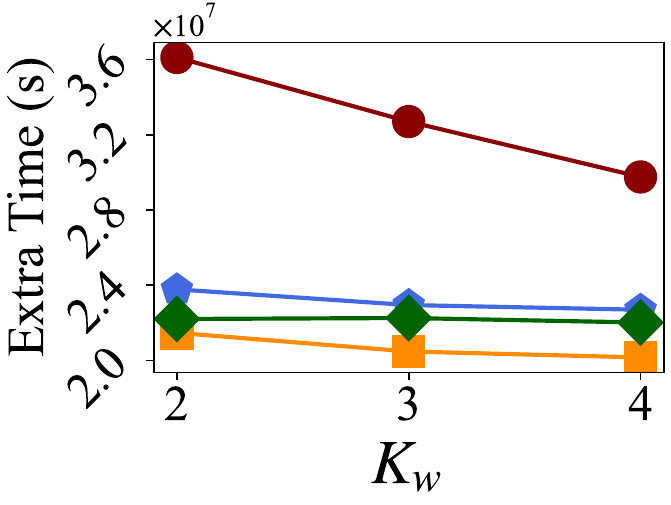}}
		\label{subfig:capacity_test_extra}}
	\hfill
	\subfigure[][{\scriptsize Extra Time(CDC)}]{
		\scalebox{0.22}[0.22]{\includegraphics{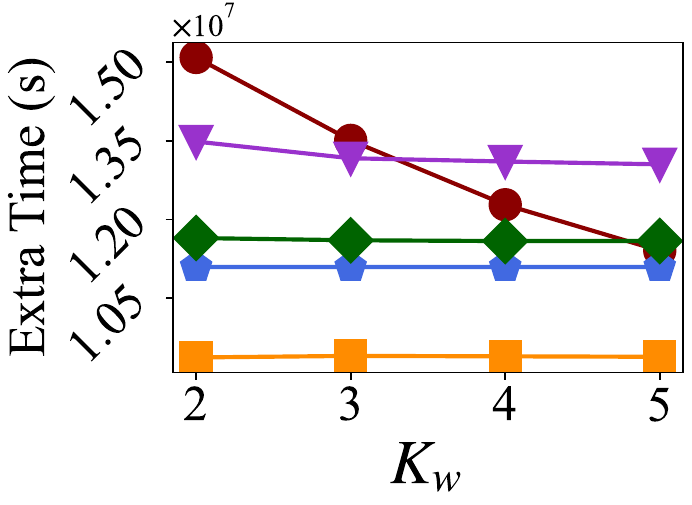}}
		\label{subfig:capacity_chengdu_extra}}
	\hfill
	\subfigure[][{\scriptsize Extra Time(XIA)}]{
		\scalebox{0.22}[0.22]{\includegraphics{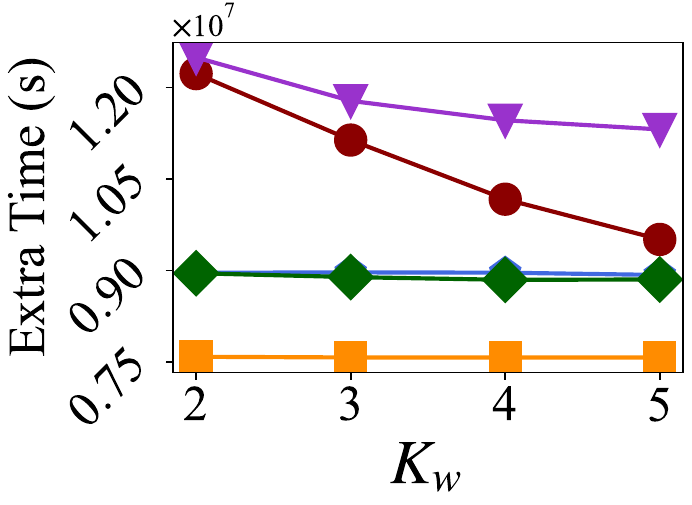}}
		\label{subfig:capacity_xian_extra}}\vspace{-2ex}
	
	\subfigure[][{\scriptsize Unified Cost(NYC)}]{
		\scalebox{0.22}[0.22]{\includegraphics{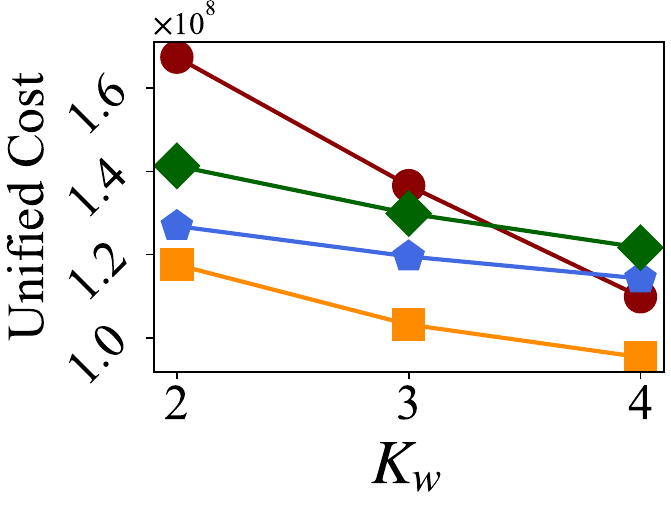}}
		\label{subfig:capacity_test_cost_t}}
	\hfill
	\subfigure[][{\scriptsize Unified Cost(CDC)}]{
		\scalebox{0.22}[0.22]{\includegraphics{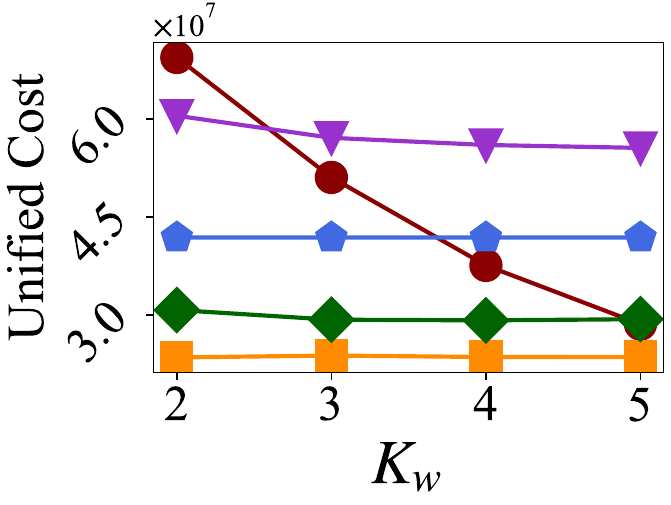}}
		\label{subfig:capacity_chengdu_cost_t}}
	\hfill
	\subfigure[][{\scriptsize Unified Cost(XIA)}]{
		\scalebox{0.22}[0.22]{\includegraphics{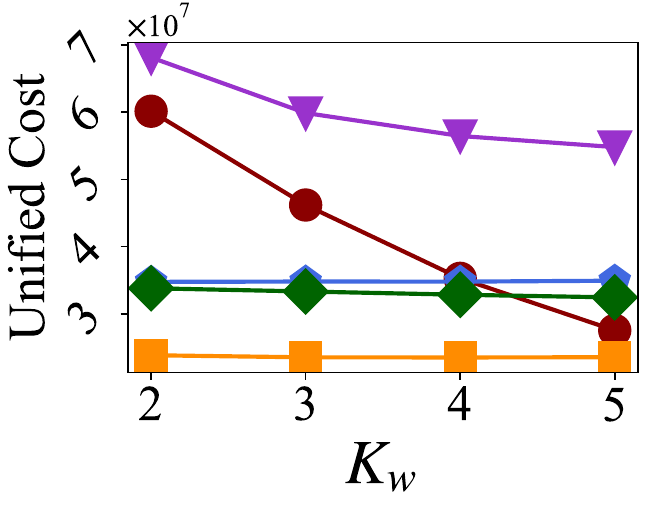}}
		\label{subfig:capacity_xian_cost_t}}\vspace{-2ex}
	
	\subfigure[][{\scriptsize Service Rate(NYC)}]{
		\scalebox{0.22}[0.22]{\includegraphics{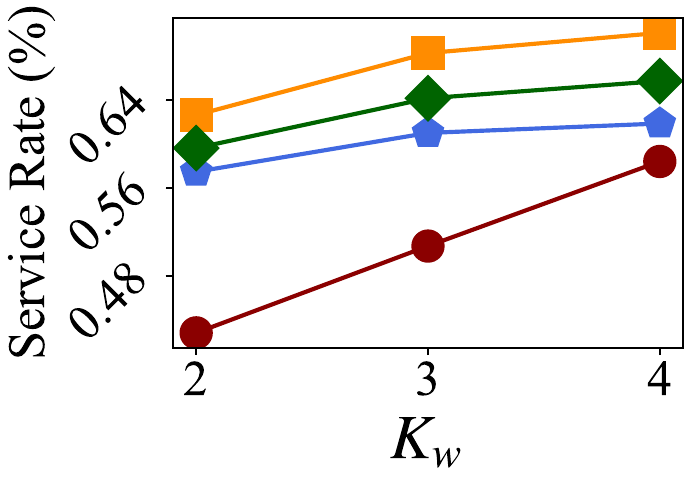}}
		\label{subfig:capacity_test_served_rate}}
	\hfill
	\subfigure[][{\scriptsize Service Rate(CDC)}]{
		\scalebox{0.22}[0.22]{\includegraphics{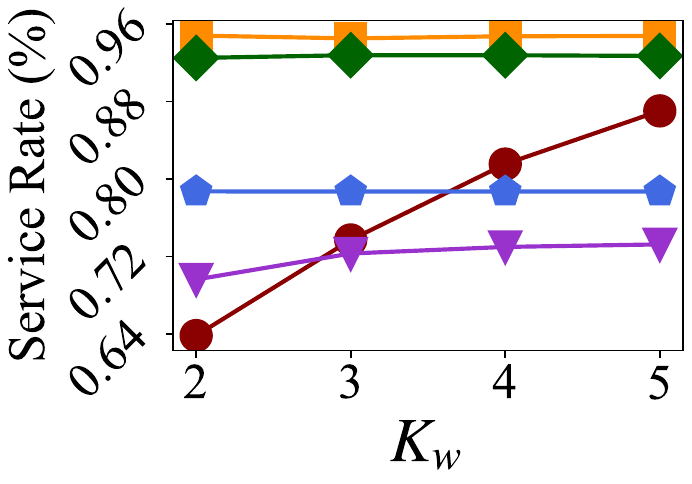}}
		\label{subfig:capacity_chengdu_served_rate}}
	\hfill
	\subfigure[][{\scriptsize Service Rate(XIA)}]{
		\scalebox{0.22}[0.22]{\includegraphics{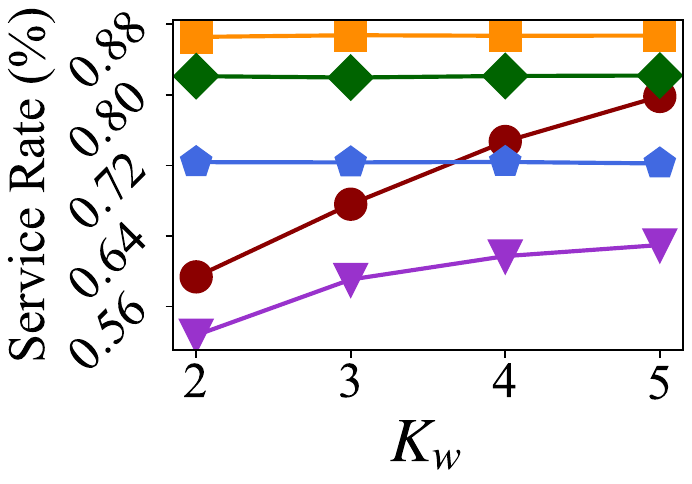}}
		\label{subfig:capacity_xian_served_rate}}\vspace{-2ex}
	
	\subfigure[][{\scriptsize Running Time(NYC)}]{
		\scalebox{0.22}[0.22]{\includegraphics{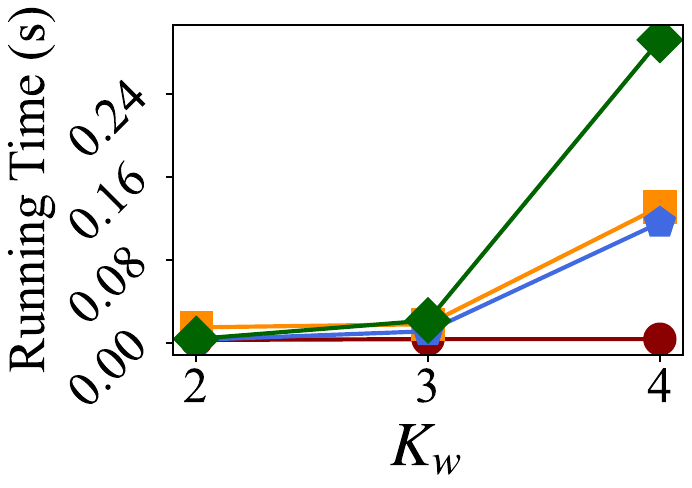}}
		\label{subfig:capacity_test_running_time}}
	\hfill
	\subfigure[][{\scriptsize Running Time(CDC)}]{
		\scalebox{0.22}[0.22]{\includegraphics{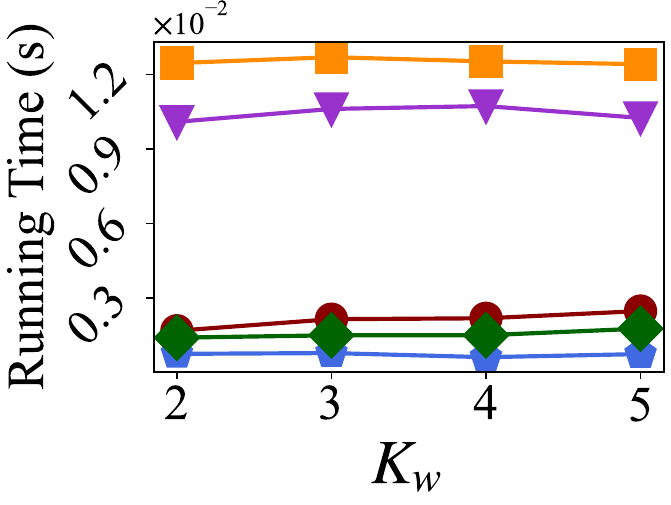}}
		\label{subfig:capacity_chengdu_running_time}}
	\hfill
	\subfigure[][{\scriptsize Running Time(XIA)}]{
		\scalebox{0.22}[0.22]{\includegraphics{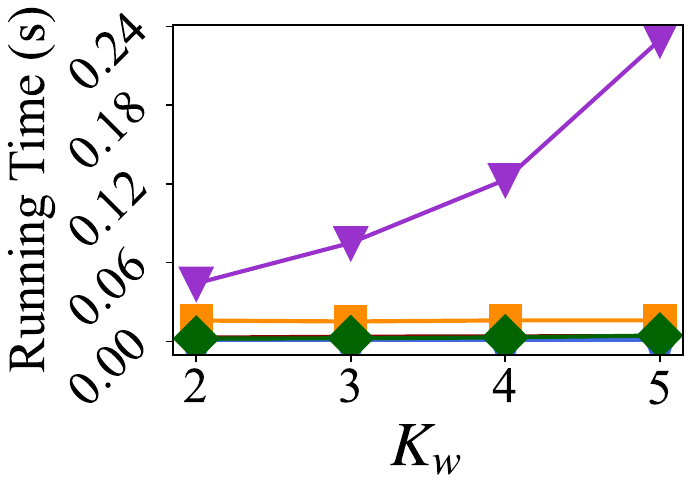}}
		\label{subfig:capacity_xian_running_time}}
	\caption{\small Performance of varying $K_w$.}\figureBelowMargin
	\label{fig:var_capacity}
\end{figure}

\noindent \textit{Impact of Varying Maximum Capacity of Workers.} Figure \ref{fig:var_capacity} presents the results of varying the workers' vehicles maximum capacity. The WATTER-expect algorithm shows superiority when the maximum capacity changes. For instance, when $K_w = 4$, WATTER-expect achieved the following reductions in extra time compared to the baselines: 15.3\%, 18.9\%, 23.5\%, and 28.1\% on CDC dataset, respectively. Increasing the maximum capacity $K_w$ significantly decreases the extra time and unified cost for GDP. However, our algorithms show less fluctuation due to four reasons: (1) when optimizing for extra time, the performance of the group with a capacity of 2 is dominant; (2) the GDP greedily inserts workers without considering waiting time and dispatch the order as long as there is a slot available for insertion. Increasing the capacity limit can effectively improve the service rate, further optimizing extra time and unified cost; (3) the non-preemptive mode we adopt means that workers cannot serve two groups at the same time. If the capacity of a worker's vehicle exceeds the size of the group to be served, the excess cannot be utilized efficiently; (4) the shareability graph on the CDC and XIA datasets is sparse, making it difficult to enumerate groups whose size is greater than 2. This can also be verified by the running time, as our algorithms show an increase in running time as the maximum capacity increases in the NYC dataset, but not on the CDC and XIA datasets.

\noindent \textit{Trade-off between Response and Detour Time.} 
We present the average proportions of response time and detour time for all served orders in Figure \ref{fig:time_compare_bin}. The results are obtained on the CDC dataset with default parameters.  We can observe that among the three variants of the WATTER algorithm, the WATTER-online achieves the lowest response time but also results in the highest detour time. On the other hand, the WATTER-timeout exhibits the highest response time and the lowest detour time. This supports our assumptions regarding response time and detour time: the longer waiting increases the likelihood of finding a higher-quality group, leading to a smaller average detour time. Meanwhile, the GDP and GAS show the same trend with WATTER-online. In contrast, WATTER-expect achieves the lowest extra time through effectively balancing response time and detour time, which allows orders to wait for an appropriate short time, and effectively reduces detour time. 

\begin{figure}[t!]\centering \vspace{-3ex}
	\scalebox{0.2}[0.2]{\includegraphics{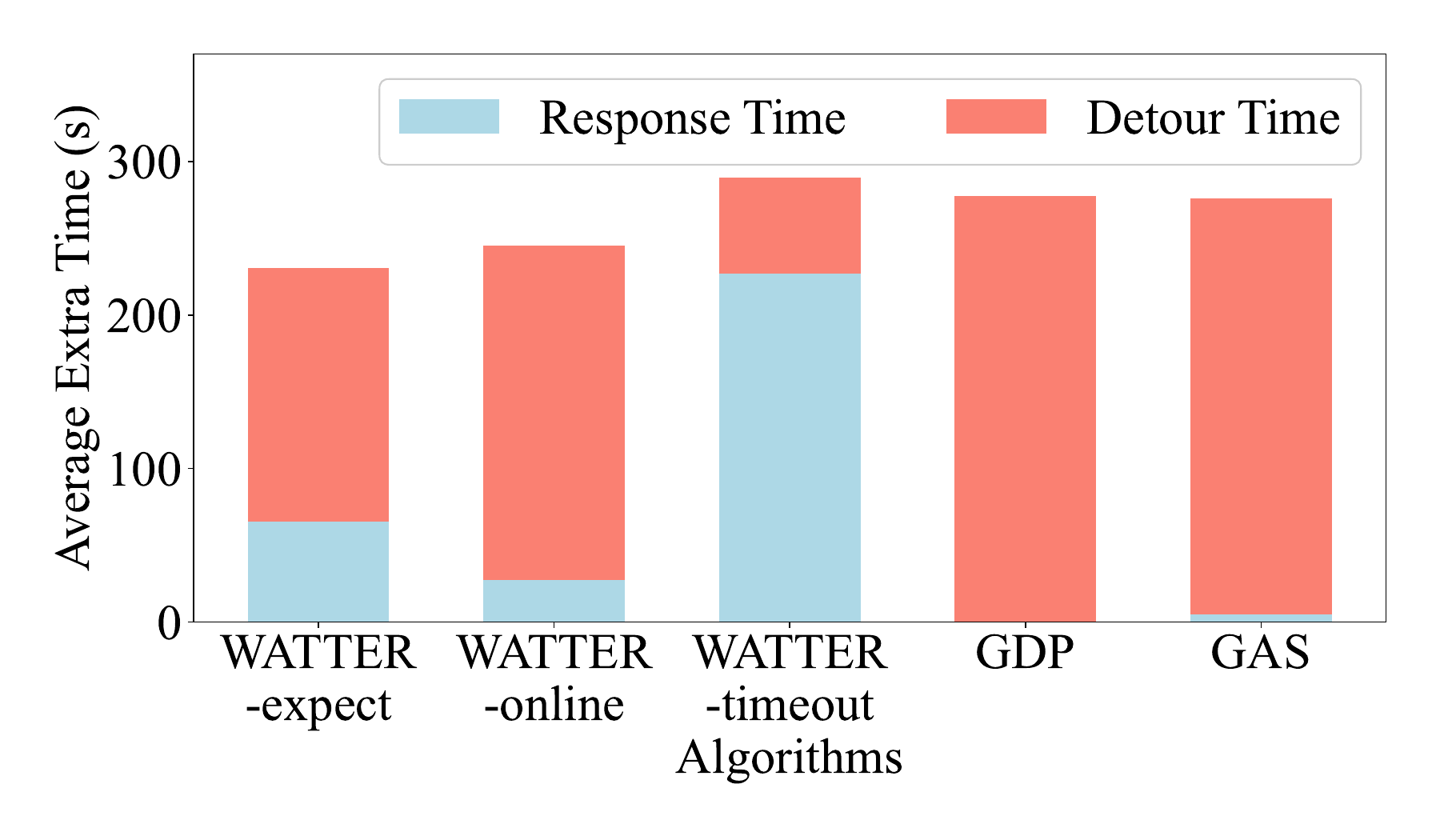}}
	\caption{\small Trade-off between response and detour time.}
	\label{fig:time_compare_bin}
\end{figure}

\noindent \textit{Summary of the Experimental Results}
We summarize the major experimental results in the following.

\begin{enumerate}
	\item  In terms of the total extra time and unified cost, our order pooling management algorithms WATTER-expect, WATTER-online and WATTER-timeout show superiority compared to the online-based method GDP and batch-based method GAS. 
	\item The threshold-based strategy has a noticeable effect on balancing the response time and group quality. The WATTER-expect have less extra time than WATTER-online and WATTER-timeout up to 15.3\% and 18.9\% on CDC dataset when $K_w=4$. 
	\item GDP are more efficient than our algorithms when the orders are more similar(e.g., in the NYC dataset). It is because the shareability graph are larger and denser, which makes huge group enumeration cost.
\end{enumerate}

\section{Related Work}
\label{sec:related}

The ridesharing problem is a variant of the dial-a-ride problem, which involves planning the routes of workers to serve orders with specific pick-up and drop-off locations. Unlike the dial-a-ride problem ~\cite{cheng2017utility,cheng2019queue, chen2019minmax}, the ridesharing problem primarily focuses on the additional benefits of grouping orders together. Recent studies ~\cite{tong2018prunegdp, wang2020demand, zeng2020gas} can be classified mainly based on processing frameworks and optimization objectives.

\noindent \textit{Process frameworks.} Online-based methods use the insertion operator as a heuristic solution for planning work routes. They greedily and incrementally insert each newly arriving order into the worker's current route based on certain objectives. Zheng \textit{et al.}~\cite{zheng2013tshare} enumerate all possible insertion positions to search for the optimal solution. Huang \textit{et al.}~\cite{huang2014kinetic} proposed the structure called kinetic tree, which maintains and provides the optimal schedule for vehicles when new orders arrive. Tong \textit{et al.}~\cite{tong2018prunegdp} proposed a dynamic programming (DP) algorithm to insert and check constraints, reducing the operation time from cubic to linear. Wang \textit{et al.}~\cite{wang2020demand} proposed demand-aware insertion based on predictions about future orders.

Batch-based methods process all orders within a mini-batch time at once. They typically enumerate all possible groups to find the optimal one and then assign the groups to workers \cite{zheng2018order, bei2018bimatch, cheng2019queue, zeng2020gas, zheng2019auction}. Bei \textit{et al.}~\cite{bei2018bimatch} formulated the problem of combinatorial optimization between orders and workers. Cheng \textit{et al.}~\cite{cheng2019queue} use machine learning models to predict future vehicle demand. They then used a queue-theoretic framework to balance the demand and supply. Zeng \textit{et al.}~\cite{zeng2020gas} proposed an index called additive tree to accelerate the enumeration and greedily choose the most profitable group to serve. The effectiveness and runtime of batch-based methods depend on the batch size setting. If the batch size is large, the runtime increases exponentially. In contrast, the utility achieved is less competitive than that of online-based methods.

\noindent \textit{Optimization objectives.} The primary optimization objectives for ridesharing services are from the perspectives of the platform and workers, with other objectives acting as constraints. From the platform's perspective, previous studies~\cite{cheng2019queue, zeng2020gas, wang2022revenue} have mainly focused on maximizing revenue and the number of served riders. The platform's revenue is calculated by subtracting travel cost of workers from the payment of riders. Given a fixed service rate, reducing the worker's travel cost can increase revenue.

Many studies~\cite{xu2019insertion, liu2020mobility, Haliem2021modelfree,li2020consensus} focus on decreasing travel costs from the worker's perspective. This can increase both the platform's revenue and the workers' earnings. To unify these objectives, Tong \textit{et al.}~\cite{tong2018prunegdp} introduced the objective of a unified cost, which integrates the three goals into one and illustrates the relationships among them.

Regarding the rider's perspective, most existing studies~\cite{cheng2017utility, chen2018price, tong2018prunegdp} impose the time constraint that riders must be delivered before the deadline. Among these studies, the manually set deadline is an important parameter that affects the algorithms' performance. Flow time~\cite{Firat2011flow1, HAME2011flow2, xu2019insertion} is a commonly used objective for improving the rider satisfaction.

However, due to limitations in the processing framework, studies on ridesharing provide group results to riders either immediately or in mini-batches, without taking into account the response time. In this paper, we propose a new optimization objective for the METRS problem that balances the response time and detour time to improve riders' satisfaction. Different from existing processing frameworks, we propose a novel order pooling management algorithm to handle the balance problem, which is more effective and efficient than previous studies.

\section{Conclusion}
\label{sec:conclusion}
In this paper, we study the Minimal Extra Time RideSharing (METRS) problem in which orders arrive dynamically, and the platform needs to group them and assign workers to serve the groups as many as possible while maximizing rider satisfaction. We prove that the METRS problem is NP-hard. To address this challenge, we propose an efficient framework WATTER. It utilizes order pooling management algorithm to allow riders' orders to wait in a temporal shareability graph-based order pool until they can be grouped effectively. We also devise an average extra time threshold-based grouping strategy that determines when to dispatch the orders in the pool. To adapt to different spatiotemporal environments of orders, we model the decision process as a Markov Decision Process (MDP) and use a reinforcement learning model to solve it. Through extensive experiments on three real dataset, we demonstrate the efficiency and effectiveness of our WATTER framework showing that it can outperform two SOTA baselines.

\bibliographystyle{ieeetr}
\bibliography{add}
\newpage
\appendix

\subsection{Order Arrival Algorithm}

We illustrate the update triggered by new order arrival, as shown in Algorithm \ref{algo:order_arrival}. To accelerate the subsequent decision-making process, we maintain a map $G_b$ that stores the current best group information of orders. When a new order arrives, we take the following three steps: (1) Find neighbors and add edges (line 1); (2) Enumerate new $k$-cliques (line 2); (3) Update the map $G_b$ based on the enumeration results (lines 3-11). In the third step, we utilize the Insertion Algorithm~\cite{tong2018prunegdp} to generate the fastest feasible route for each group. Although the neighbors may have better groups due to the arrival of new orders, but the entire insertion process can still be completed with only one enumeration.

\begin{algorithm}[ht!]
	\DontPrintSemicolon
	\KwIn{order $o^{(i)}$, original graph $\mathcal{G}$, map $G_b$}
	\KwOut{updated graph $\mathcal{G}$, updated map $G_b$}
	traverse the graph $\mathcal{G}$, find neighbors that can be shared with $o^{(i)}$ and add new edges \;
	$G^{(i)}\rightarrow $ enumerate all $k$-cliques that contains $o^{(i)}$ \;
	add entry $(i, (NULL, INF))$ into the map $G_b$ \;
	
	\ForEach{clique $g \in G^{(i)}$} {
		$L \rightarrow $ fastest feasible route of $g$ \;
		\lIf {$L$ doesn't exist} {\textbf{continue}}
		$t_e \rightarrow$ the average extra time of requests in $L$\;
		\ForEach{$o^{(j)} \in g$}{
			$t_e^{(j)} \rightarrow$ the current optimal average extra time stored in $G_b[j]$ \;
			\If{$t_e < t_e^{(j)}$}{ 
				replace $G_b[j]$ with $(j, (g, t_e))$ \;
			}
		}
	}
	
	\Return{updated graph $\mathcal{G}$, updated map $G_b$} \;
	\caption{Order Arrival}
	\label{algo:order_arrival}
\end{algorithm}

\subsection{Order Departure and Expiration Algorithm}

In addition to new orders, updates to the graph can also be triggered by dispatch or rejection of orders, as well as expiration of groups or edges. We can process these updates in one subroutine, as shown in Algorithm \ref{algo:order_departure}. If a leave order is included in the current best groups of other orders, then the current best group of these orders needs to be recomputed. We use set $C$ to store candidate orders to reduce duplicate calculations (lines 1-6).  As for edge and group expiration, the process is similar to departure. We can use the group with a size of two to present the edge. Then, we only need to consider orders whose best group contains the orders in the group (lines 7-8). For each candidate order, we reuse and make a little  Algorithm \ref{algo:order_arrival} with a small modification to generate the current best group (lines 9-10). The only different of the modification is there's no need to update the graph structure of the candidate orders.

\subsection{Detailed Training Algorithm of DQN}

Given historical data, we simulate the dispatch process using strategy Equation \ref{eq:strategy} in the order pool and collect experience transitions $(s_t, a, $$ s_{t + \Delta t}, r_t)$. Algorithm \ref{algo:simulator} presents the details of an episode for experience collection and network update. The approach is implemented using the aforementioned framework, and we present the key sub-procedures for simplicity. We maintain a replay memory $M$ to store the transitions. Additionally, due to the uncertainty of the next state for wait actions, we use a buffer $B$ to temporarily store these transitions (line 3). To avoid the sparsity of the demand distribution, we use the first $m$ orders to initialize it (line 4). Note that the update of the distribution only needs constant time (Line 7). Then, we decide the action for each order in the graph using the main network $V$ and strategy Algorithm \ref{algo:expectation_based_make_decision} (line 17-19). For the trade-off of exploration and exploitation, the action may be altered with probability $\epsilon$ (Line 20). If the order has reached the maximum waiting time, we also dispatch it (line 21). There are two kinds of situations that lead to the termination state: (1) timeout but not included in any group (line 11-15); (2) dispatched with a certain group (line 21-25). In these situations, the temporal transitions in buffer $B$ will be popped and flushed with new transitions (Algorithm \ref{algo:replace_terminate}), while wait action leads to transitions temporarily stored in buffer $B$.

\begin{algorithm}[ht!]
	\DontPrintSemicolon
	\KwIn{order group $g$, original graph $\mathcal{G}$, map $G_b$}
	\KwOut{updated graph $\mathcal{G}$, updated  map $G_b$}
	
	initialize order set $C$ that need re-compute best group \;
	\If {departure} {
		\ForEach{$o^{(i)} \in g$}{
			remove all edges related to order $o^{(i)} \in g$ \;
			$C^{(i)} \rightarrow $ the orders that have $o^{(i)}$ as a member of their best group \;
			$C \rightarrow C \cup C^{(i)}$ \;
		}
	}
	\If {expiration} {
		$C \rightarrow$ the orders that have all orders in $g$ as members of their best group \;
	}
	\ForEach{$o^{(j)} \in C$}{
		Order\_Arrival($o^{(j)}$) \;
	}
	
	\Return{updated graph $\mathcal{G}$, updated map $G_b$} \;
	\caption{Order Departure or Group Expiration}
	\label{algo:order_departure}
\end{algorithm}

\begin{algorithm}[ht!]
	\DontPrintSemicolon
	\KwIn{original state $s$, action $a$, reward $r$, buffer $B$ and memory $M$}
	\KwOut{updated buffer $B$ and memory $M$}
	
	$\delta' \leftarrow (s, a, o, TERMINATE)$ \;
	
	retrieve last transition $\delta=(s,a,o,s')$ from $B$\;
	replace $s' \in \delta$ with $s$ \;
	
	flush $\delta, \delta'$ into $M$ \;
	
	\Return{updated buffer $B$ and memory $M$} \;
	\caption{Replace Terminate}
	\label{algo:replace_terminate}
\end{algorithm}

\begin{figure}[t!]
	\subfigure[][{\scriptsize Service Rate}]{
		\scalebox{0.25}[0.25]{\includegraphics{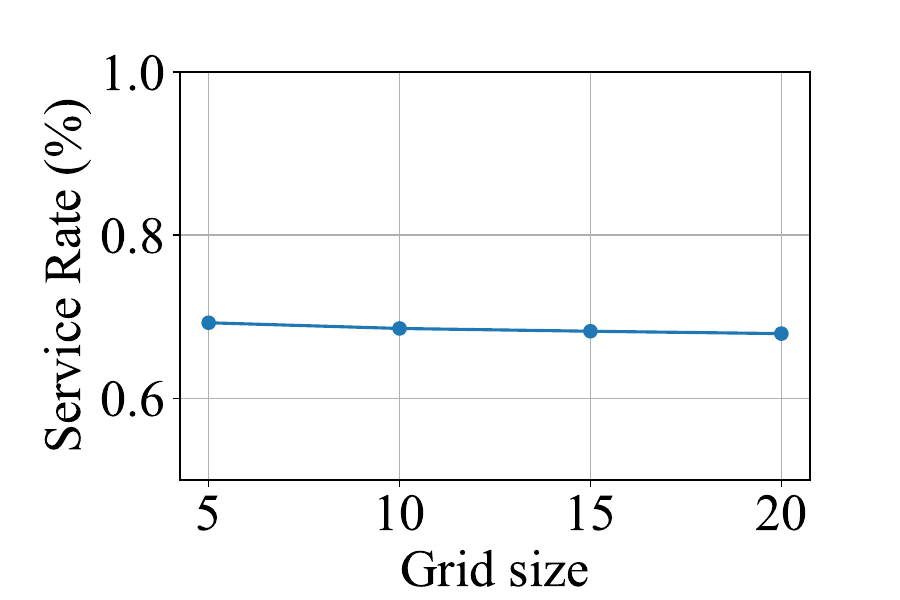}}
		\label{subfig:grid_served_rate}}
	\hfill
	\subfigure[][{\scriptsize Average Extra Time}]{
		\scalebox{0.25}[0.25]{\includegraphics{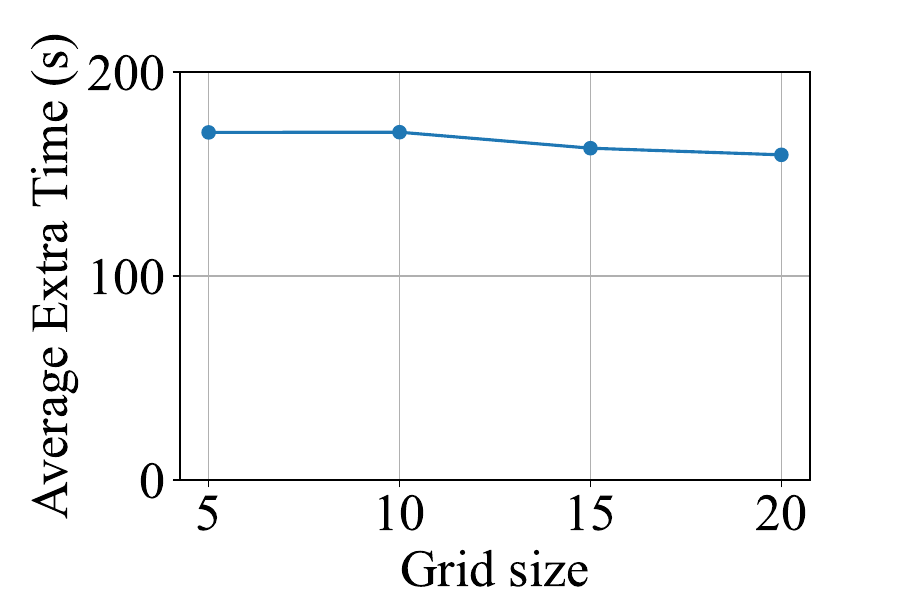}}
		\label{subfig:grid_avg_extra}}
	
	\subfigure[][{\scriptsize Unified Cost}]{
		\scalebox{0.25}[0.25]{\includegraphics{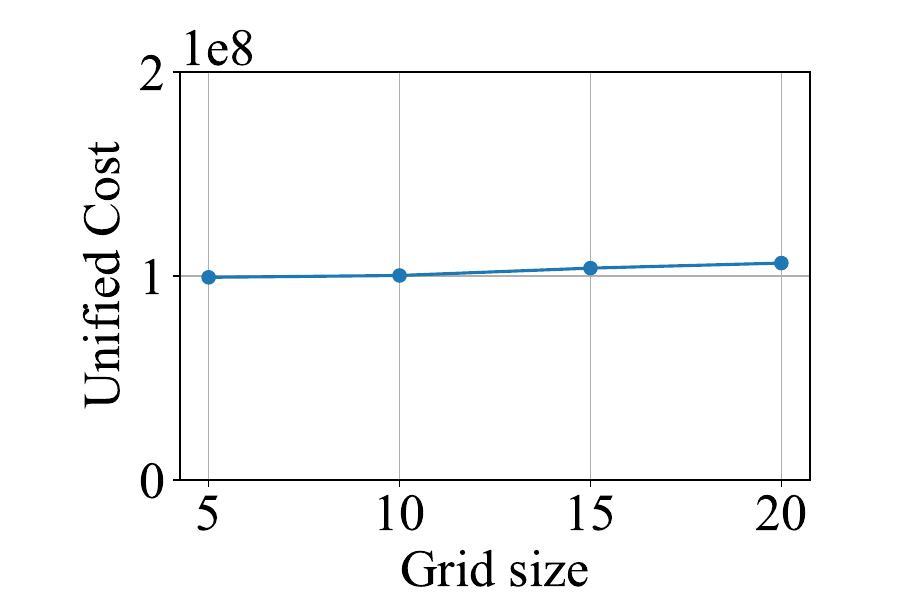}}
		\label{subfig:grid_cost}}
	\hfill
	\subfigure[][{\scriptsize Running Time}]{
		\scalebox{0.25}[0.25]{\includegraphics{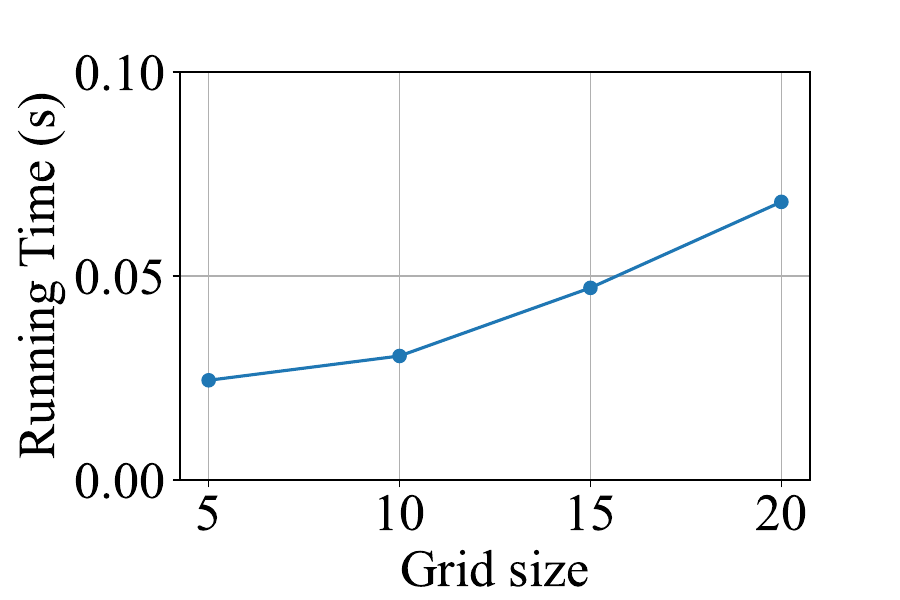}}
		\label{subfig:grid_running_time}}
	
	\caption{\small Performance of varying grid size.}\figureBelowMargin
	\label{fig:var_grid}
\end{figure}

\subsection{Impact of Varying Grid Size}

To investigate the impact of grid size on algorithm performance, we conducted tests using the WATTER-expect algorithm on the NYC dataset. We divided the entire road network into grids of varying sizes: $5\times 5$, $10\times 10$, $15\times 15$, and $20\times 20$, while keeping the other parameters at their default values. The grid size primarily affects the accuracy of the model's predictions. More grids require more detailed training in order to achieve similar levels of accuracy. As the model takes distribution information into account, the number of grids is related to the feature dimensions, and using more grids results in a significant increase in algorithm running time.  Figure \ref{fig:var_grid} presents the results of varying the number of grids. After increasing the number of grids, there is little change in the unified cost, average extra time, and service rate. However, the average running time per request significantly increases.

\begin{algorithm}[t!]
	\DontPrintSemicolon
	\KwIn{A set $W$ of $m$ workers, a set $O$ of $n$ orders ordered by arriving time}
	\KwOut{Learned state value function $V$}
	initialize shareability graph $\mathcal{G}$ \;
	initialize network $V$, $\hat{V}$ \;
	initialize replay memory $M$ and buffer $B$\;
	initialize the demand and supply distribution $O,E,D$ by the first $m$ orders \;
	\ForEach{$o^{(i)} \in O$} {
		insert $o^{(i)}$ into $\mathcal{G}$ and set $t^{(i)} \rightarrow t_s$ \;
		update demand distribution $O,E$ \;
		
		\ForEach{$o^{(j)} \in \mathcal{G}$} {
			$g$ $\leftarrow$ find a best group in $\mathcal{G}$ contains $o^{(j)}$ \;	
			\If{$g=$\texttt{NULL}} {
				\If{is\_timeout($o^{(j)}$)}{
					remove $o^{(j)}$ from $\mathcal{G}$ \;
					construct current state $s$ \;
					Replace\_Terminate($s$, 0, 0, $B$, $M$) \;
					update $O,E,D$\;
				}
				\textbf{continue} \;
			} 
			
			construct current states $S$ of orders in $g$ \;
			compute the state values $V$ by $S$ \;  
			$a\leftarrow$ threshold\_based\_make\_decision($g,V,t$) \;
			alter $a$ with another action with probability $\epsilon$\;
			
			\If{$a = 1$ \textnormal{or} is\_timeout($o^{(j)}$)} {
				\ForEach{$o^{(k)} \in g$} {
					terminate($S[k]$, a, $p^{(k)}- t_d^{(k)}$, $B$, $M$) \;
				}
				assign the $g$ to a worker to serve. \;
				update $O,E,D$\;
			} \Else {
				$\delta' \leftarrow (S[i], a, \Delta t, NULL)$ \;
				add $\delta '$ into $B$ \;
			}
		}
	}

	\Return{$M$} \;
	\caption{Deep-$Q$-Network (DQN) Learning for Estimating Value Function }
	\label{algo:simulator}
\end{algorithm}

\subsection{Result of the Convergence Curves}

\begin{figure}[h]
	\subfigure{
		\scalebox{0.3}[0.3]{\includegraphics{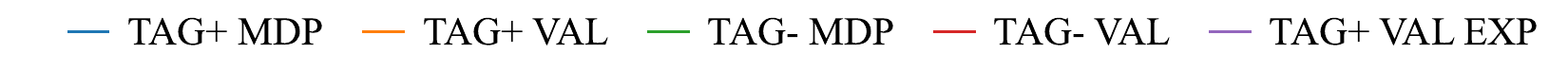}}}\hfill\vspace{-2ex}
	\addtocounter{subfigure}{-1}
	
	\subfigure[][{\scriptsize Cumulative Reward Trend}]{
		\scalebox{0.25}[0.25]{\includegraphics{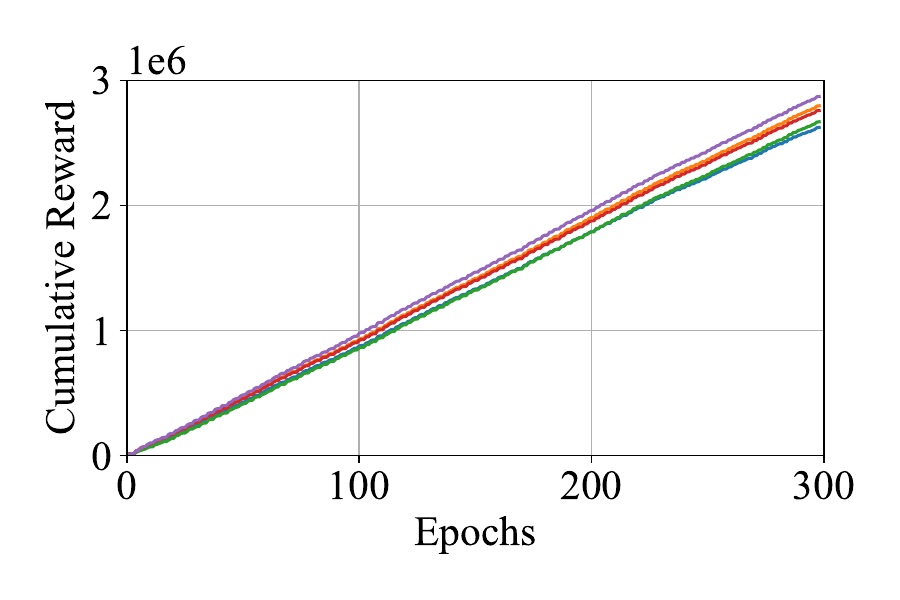}}
		\label{subfig:reward_trend}}
	\subfigure[][{\scriptsize Loss Trend}]{
		\scalebox{0.25}[0.25]{\includegraphics{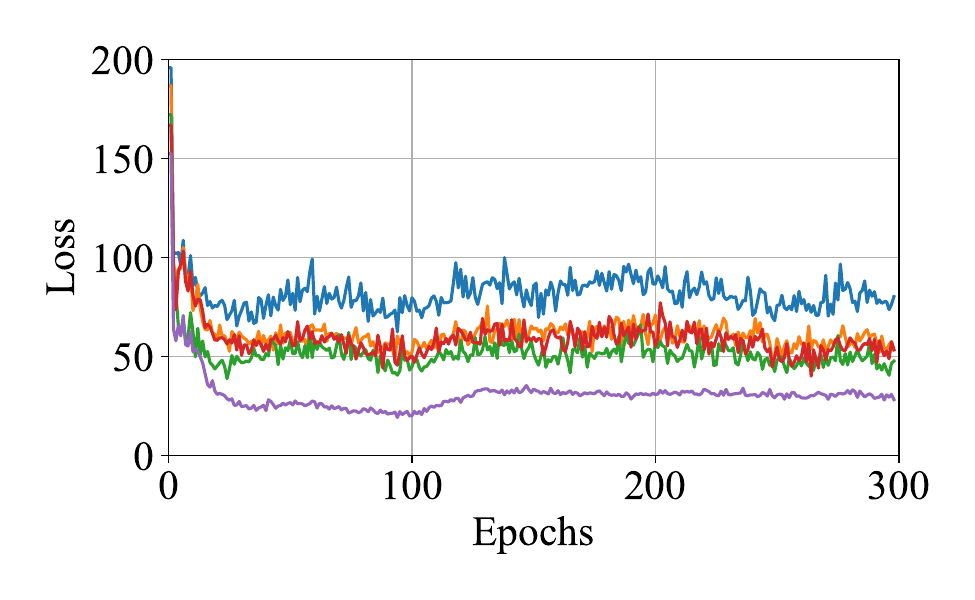}}
		\label{subfig:loss_trend}}
	\hfill
	
	\caption{\small Convergence Curves.}\figureBelowMargin
	\label{fig:convergence_curve}
\end{figure}

There can be multiple implementation approaches for the experience generation in the training process. For example, (1) the agent can make dispatch decisions based on the value function; (2) the value function can be used as an estimation of threshold. The first approach has a limitation in that the difference between the current state and the next state may not be significant. If the neural network is not deep enough, it may struggle to distinguish between the two states, leading to a policy with high randomness. Additionally, considering the delay, we do not employ large-scale neural networks as the implementation of the value function. The second approach has a drawback in the initial training phase, where parameter initialization can bias the value function towards smaller expected thresholds, making it challenging to obtain high-quality training experiences. Therefore, we adopt an off-policy approach to enhance the training process. In the initial training phase, we collect training experiences using the threshold-based strategy and the optimal expected threshold from Section \ref{sec:solution2}. In the later stage, we utilize the second approach to gather training experiences for fine-tune. 

Regarding the experiments on model convergence, we have prepared several training configurations. The labels TAG+ and TAG- indicate whether the target loss is used during training. The MDP configuration utilizes Q-learning with the value function for decision-making. In other words, the agent selects the action (dispatch or wait) that leads to the next state with higher reward as the final action. The VAL configuration employs the value function as the expected threshold for decision-making, similar to the approach described in Section \ref{sec:solution2}. Finally, the EXP label indicates that, in the first part of epochs, the GMM from Section \ref{sec:solution2} is used to generate expected thresholds, which are employed for decision-making to produce training experiences. In the later part of epochs, the value function is used as the expected threshold. In this section, there are a total of five training configurations: (1)TAG+ MDP, (2)TAG+ VAL, (3)TAG- MDP, (4)TAG- VAL, and (5)TAG+ VAL EXP. The last configuration TAG+ VAL EXP is used in WATTER-expect.

We conducted 10 repetitions of training on the NYC dataset, using default parameter configurations for each model. The resulting TD loss and reward mean values are reported to demonstrate the convergence, as depicted in Figure \ref{fig:convergence_curve}. Notably, the green curve, representing the pure Q-learning approach, exhibited the lowest cumulative reward. Conversely, the purple curve, representing the training approach used in WATTER-expect, achieved the highest cumulative reward. We observed that using the TAG+ label configuration resulted in higher loss compared to the TAG- label configuration. This is due to partial fitting of the target value, resulting in a higher mean of the value function. For the VAL approach, when utilizing the GMM to generate training experiences in the early epochs (indicated by the purple curve), the convergence effect is smoother and yielded the highest cumulative reward. However, after the 100th epoch, the loss of the purple curve showed a slight increase. This occurred because before the 100th epoch, we used the GMM to generate training experiences, while afterward, we switched to using the value function for generating training experiences. 

\subsection{Impact of Varying Timeslots size}

\begin{figure}[t]
	\subfigure[][{\scriptsize Service Rate}]{
		\scalebox{0.25}[0.25]{\includegraphics{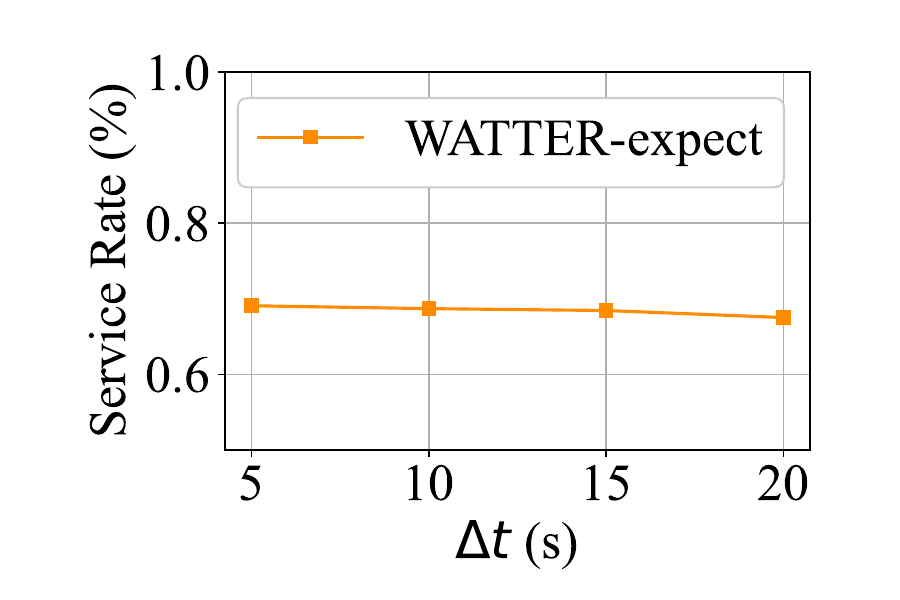}}
		\label{subfig:timeslots_served_rate}}
	\hfill
	\subfigure[][{\scriptsize Average Extra Time}]{
		\scalebox{0.25}[0.25]{\includegraphics{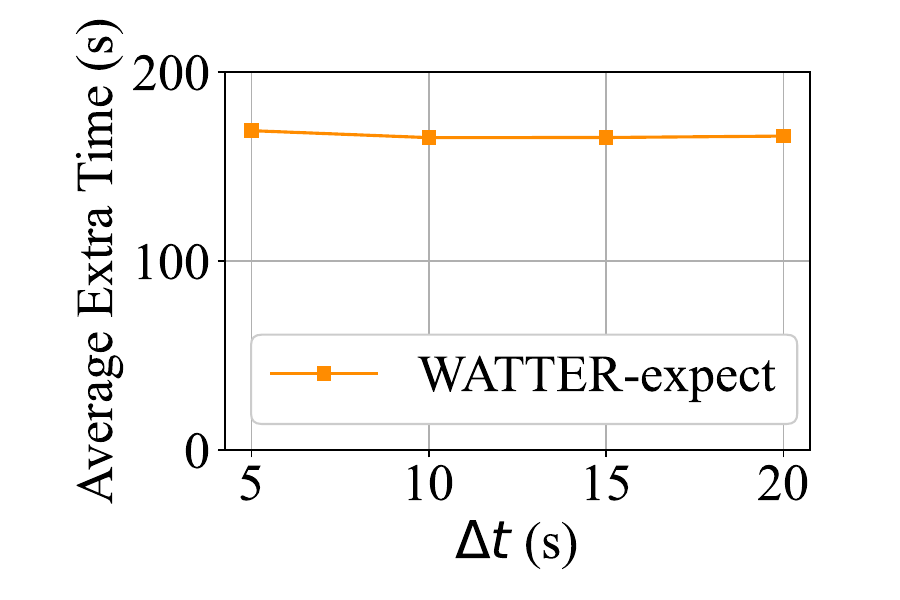}}
		\label{subfig:timeslots_avg_extra}}
	
	\subfigure[][{\scriptsize Unified Cost}]{
		\scalebox{0.25}[0.25]{\includegraphics{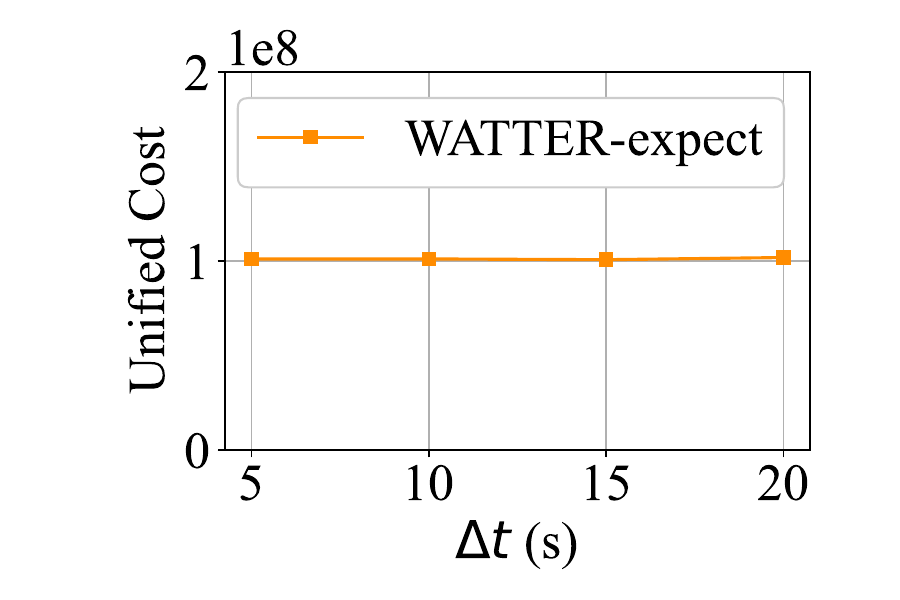}}
		\label{subfig:timeslots_cost}}
	\hfill
	\subfigure[][{\scriptsize Running Time}]{
		\scalebox{0.25}[0.25]{\includegraphics{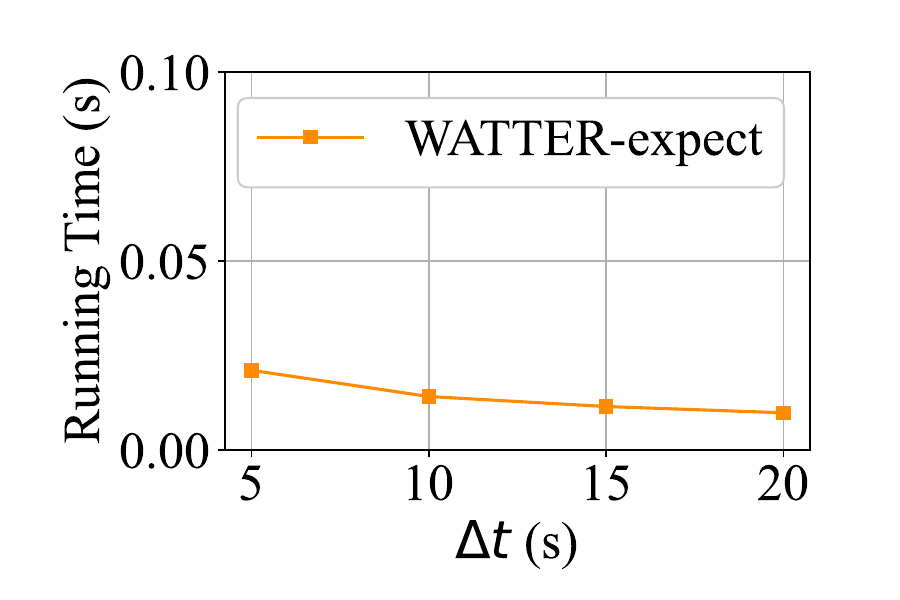}}
		\label{subfig:timeslots_running_time}}
	
	\caption{\small Performance of varying timeslots size on NYC dataset.}
	\label{fig:var_timeslot}
\end{figure}

We conducted experiments on the NYC dataset to investigate the effect of varying time slot size $\Delta t$ and watching window $\eta^{(i)}$. Figure \ref{fig:var_timeslot} illustrates the performance of WATTER-expect when $\Delta t$ is varied from $5s$ to $10s$. We observed no significant effect in the service rate, average extra time, and unified cost after changing $\Delta t$. This can be attributed to two reasons. Firstly, most orders are unlikely to have their best group updated within a short period of time. Thus, the impact of $\Delta t$ on whether orders need to be dispatched is minimal when $\Delta t$ is small. Secondly, we utilized a shallow neural network as the value function. Compared to the geographical information of the orders, the waiting time has a less significant impact on the network. Additionally, we discovered that the running time metric significantly decreases as $\Delta t$ increases. This is due to the decreased frequency of neural network inference and graph traversal.

\subsection{Impact of Varying Watching Window}

In addition, we conducted experiments to observe the impact of the value of the watching window parameter $\eta^{(i)}$ on the performance of the WATTER-expect, WATTER-online, and WATTER-timeout algorithms. As shown in Figure \ref{fig:var_window}, for each order $o^{(i)}$, we set its watching window as $\eta^{(i)} = t^{(i)} + \eta * cost(l_p^{(i)}, l_d^{(i)})$. When varying $\eta$ from $0.5$ to $0.9$, we find that all three algorithms show a significant decrease in service rate, especially WATTER-timeout. This decrease occurred because some orders became impossible to find a shareable route as the waiting window approached.

Considering the average extra time, WATTER-timeout showed a noticeable increase as $\eta$ increased. Since long waiting time makes it harder to further reduce the detour time. In contrast, WATTER-online had relatively small performance changes compared to the other two algorithms under this parameter. Similar trends were observed in the unified cost as well. Lastly, due to the increase in waiting time, the number of orders in the order pool also increased. The overhead of traversal and enumeration resulted in a significant increase in running time for all three algorithms.

\begin{figure}[t]
	\subfigure{
		\scalebox{0.22}[0.22]{\includegraphics{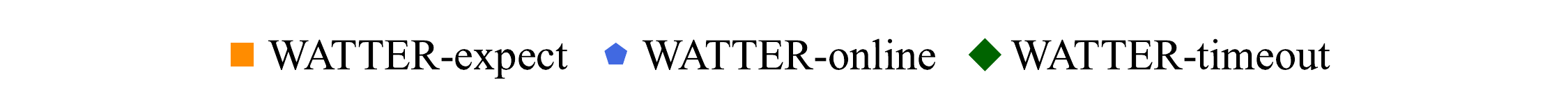}}}\hfill\\\vspace{-3ex}
	\addtocounter{subfigure}{-1}
	
	\subfigure[][{\scriptsize Service Rate}]{
		\scalebox{0.25}[0.25]{\includegraphics{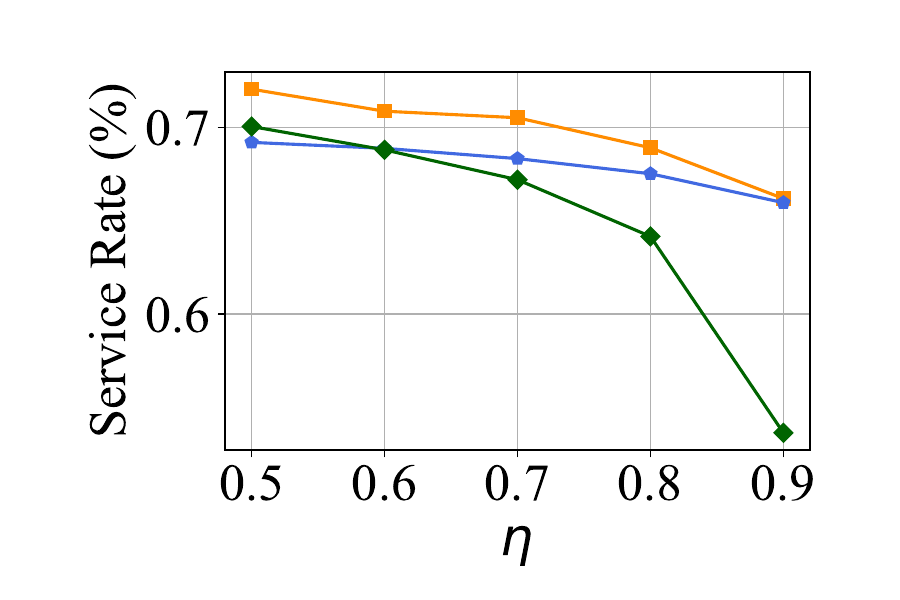}}
		\label{subfig:window_served_rate}}
	\hfill
	\subfigure[][{\scriptsize Average Extra Time}]{
		\scalebox{0.25}[0.25]{\includegraphics{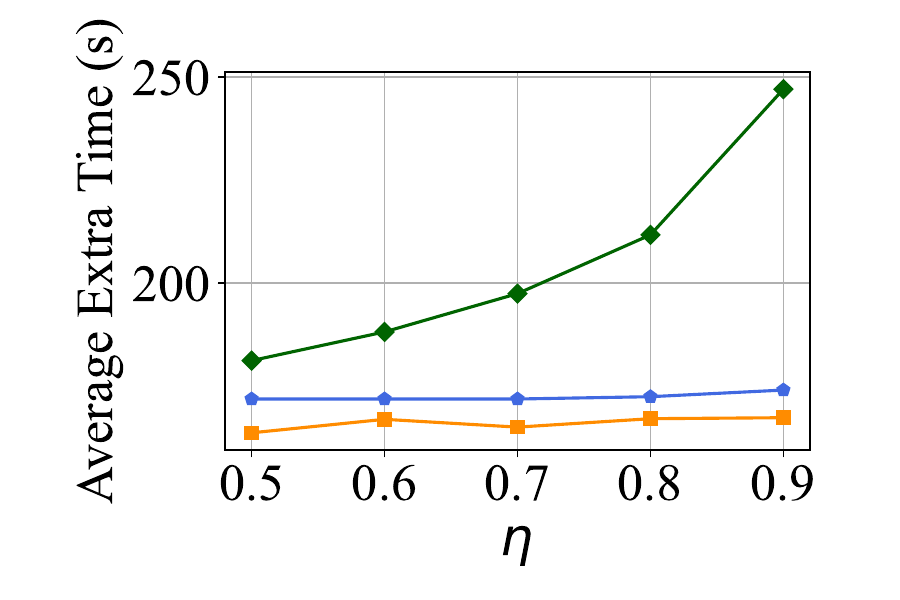}}
		\label{subfig:window_avg_extra}}
	
	\subfigure[][{\scriptsize Unified Cost}]{
		\scalebox{0.25}[0.25]{\includegraphics{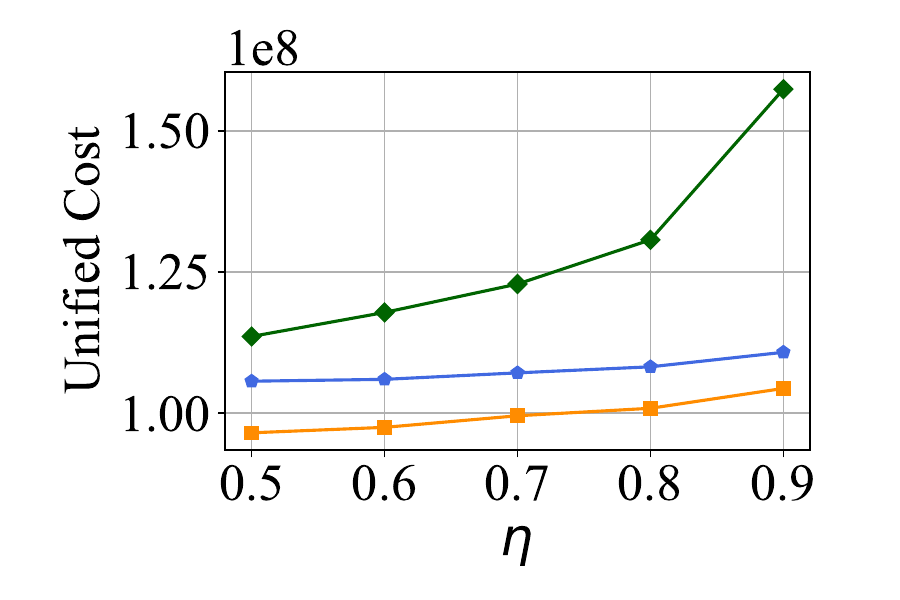}}
		\label{subfig:window_cost}}
	\hfill
	\subfigure[][{\scriptsize Running Time}]{
		\scalebox{0.25}[0.25]{\includegraphics{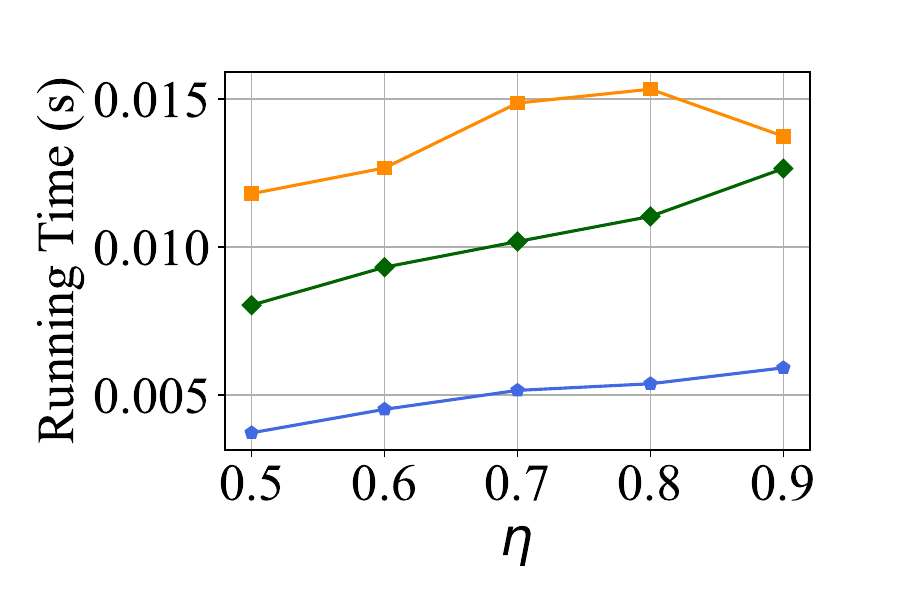}}
		\label{subfig:window_running_time}}
	
	\caption{\small Performance of varying watching window on NYC dataset.}
	\label{fig:var_window}
\end{figure}

\end{document}